
\documentclass[a4paper, 12pt]{article}

	
	\usepackage[francais,english]{babel} 				
	\usepackage[utf8]{inputenc} 				
	\usepackage{amsmath} 						
	\usepackage{amsthm} 						
	\usepackage{amssymb}
	\usepackage{makeidx} 						
	\usepackage{bbm}

	\usepackage{color}

	
	\usepackage{url}							
	\usepackage{subfigure}
	\usepackage{graphicx}
	\usepackage{multicol}
	\usepackage{pifont}

			\usepackage{algorithm}
			\usepackage{algorithmic}
			
			\usepackage{fullpage}
			\usepackage{authblk}



	
\setcounter{tocdepth}{5}


\begin{document}		

	\definecolor{dkgreen}{rgb}{0,0.6,0}
	\definecolor{gray}{rgb}{0.5,0.5,0.5}
	\definecolor{mauve}{rgb}{0.58,0,0.82}



\theoremstyle{plain}			
	\newtheorem{lem}		{Lemme}
	\newtheorem{prop}		{Proposition}
	\newtheorem{theo}		{Theorem}	
	\newtheorem{coro}		{Corollaire}
\theoremstyle{definition}
	\newtheorem{defn}		{Definition}
	\newtheorem{conj}		{Conjecture}
	\newtheorem{exmp}		{Exemple}
\theoremstyle{remark}			
	\newtheorem*{case}		{Cas}	
	\newtheorem*{mynote}		{Note}	
	\newtheorem{remark}		{Remark}	







 \title{Incompressible limit of a continuum model of tissue growth with segregation for two cell populations}

 \author[1]{Alina Chertock}
 \author[2]{Pierre Degond}
  \author[2,3]{Sophie Hecht}
   \author[3]{Jean-Paul Vincent}
   \affil[1]{Department of Mathematics, North Carolina State University, Raleigh, NC 27695, USA }
   \affil[2]{Department of Mathematics, Imperial College London, London SW7~2AZ, UK }
   \affil[3]{Francis Crick Institute, 1 Midland Rd, London NW1 1AT, UK }
   \affil[ ]{\textit {chertock@math.ncsu.edu, p.degond@imperial.ac.uk, sh5015@ic.ac.uk, jp.vincent@crick.ac.uk}}



\maketitle



\begin{abstract}
This paper proposes a model for the growth two interacting populations of cells that do not mix. The dynamics is driven by pressure and cohesion forces on the one hand and proliferation on the other hand. Following earlier works on the single population case, we show that the model approximates a free boundary Hele Shaw type model that we characterise using both analytical and numerical arguments. \end{abstract}

\providecommand{\Acknowledgements}[1]{\textbf{{Acknowledgements. }} #1}
\Acknowledgements{The work of AC was supported in part by NSF Grant DMS-521051 and RNMS Grant DMS-1107444 (KI-Net). SH and JPV acknowledge support from the Francis Crick Institute which receives its core funding from Cancer Research UK (FC001204), the UK Medical Research Council (FC001204), and the Wellcome Trust (FC001204). PD acknowledges support by the Engineering and Physical Sciences 
Research Council (EPSRC) under grants no. EP/M006883/1, EP/N014529/1 and 
EP/P013651/1, by the Royal 
Society and the Wolfson Foundation through a Royal Society Wolfson 
Research Merit Award no. WM130048 and by the National Science 
Foundation (NSF) under grant no. RNMS11-07444 (KI-Net). PD is on leave 
from CNRS, Institut de Math\'ematiques de Toulouse, France. PD and SH would like to thank Andrea Bertozzi for stimulating discussions. }

\providecommand{\DataStatement}[1]{\textbf{{Data Statement. }} #1}
\DataStatement{No new data were collected in the course of this research.}

\providecommand{\keywords}[1]{\textbf{{Keywords.}} #1}
\keywords{Tissue growth; Two cell populations; Gradient flow; Incompressible limit; Free boundary problem}

\providecommand{\AMS}[1]{\textbf{{AMS subject classifications.}} #1}
\AMS{35K55; 35R35; 65M08; 92C15; 92C10}

\section{Introduction}

During development, tissues and organs grow while generating diverse cell types. Thus, different cell populations co-exist during growth. For example, in a developing limb, prospective muscles, bone and epidermis become distinct during development and as a result grow at different rates. As they grow, these cells types contribute mass gain for the whole structure. However, since the different cell populations grow at different rates, stress arises and must be relieved to ensure that they contribute harmoniously to the final structure. How differential growth rates within a structure are accommodated is therefore an important question in developmental biology. To approach this question from a theoretical point of view, we consider a model whereby an idealised tissue is composed of one contiguous cell population located within a wider area occupied by another cell population.

Our approach relies on a new continuum model for two populations of cells which includes the following biological features. First, we impose a constraint that cells do not overlap which gives rise to maximal packing tissue density. This is ensured in the model by an appropriate pressure-density relationship which becomes singular at the packing density. Second, we model cell-cell contact inhibition by implementing cell motion in the direction opposite to the cell density gradient. These features  are believed to play important roles in the development of  mono-layered epithelial tissues such as pseudo-stratified epithelia. Third, the model incorporates features that are specific to two non mixing cell populations. The model favors segregation by penalizing the mixing of cells of different populations. Such segregation is observed in various tissues, such as developing tissues when there are two populations of cells which are genetically distinct, or cancerous tissues composed of proliferative and healthy cells. Our model is set in an optimal transport framework. We describe the cell populations by means of continuum densities and assume that they satisfy coupled gradient flow equations which tend to decrease an appropriate free energy. The free energy decay encompasses important properties of the biological system and the use of a gradient flow ensures that this decay is built into the model. The choice of the metric used to compute the free energy gradient is also a critical modelling choice. In this paper, we choose the Wasserstein metric as it is the natural choice to measure the distances between densities and many different types of partial differential equation (PDE) have been derived as gradient flows in this metric (examples include the porous medium equation \cite{Otto}, parabolic of convection-diffusion equations \cite{KW}, the Fokker Planck equation \cite{JKO}, etc). 

After describing the model, we investigate its incompressible limit. In this limit, the the possible values of the cell densities are reduced to either zero or their respective maximal value corresponding to the packing density. We show that this limit model is a free boundary Hele-Shaw model (HSM), which allows one to focus on the geometric evolution of the boundaries of the domain occupied by the two species.

Mathematical models have been widely used to study tissue development or tumour growth. Among these, we distinguish two ways of representing the cells. On the one hand, discrete models consider each cell as an individual entity whose position and other attributes evolve in time \cite{DH,GLD}. This provides a high level of accuracy but also results in large computational costs. On the other hand, continuum models consider local averages such as the cell number density, as function of space and time, which evolve according to a suitable PDE \cite{AM04,BCGRS}. This description is appropriate when the number of cells is large as it dramatically reduces the computational cost. However it only gives access to the large scale features of the system as the small scale features are averaged out. As the goal of this paper is to study the evolution of the whole tissue, we have chosen to use a continuum model.  Continuum models roughly fall into two categories. The first category comprises models which describe the dynamics of the cell density through convection and diffusion \cite{Byrne,BenAmar,RBEJPJ}. In the second category, we find models which describe the motion of the geometric boundary between the tissue or the tumour and its environment through geometric evolution equations \cite{Cui,FriedmanHu,Greenspan,Lowengrub}. The latter share similarities with Stefan's free boundary problem in solidification \cite{HMS}. These two types of models are related to one another through asymptotic limits. In particular, some tumour growth models of the first type have been related to Hele-Shaw free boundary models in \cite{PQV,PQTV,PV}.

In this paper, we propose a two cell population model as an extension of earlier models for single cell populations introduced in \cite{BD,PQV} in the context of tumour growth. In these works, the authors consider a single category of cells whose density is denoted by $n(x,t)$ and depends on time $t\geq 0$ and position $x\in \mathbb{R}^d$. The diffusion of the density is triggered by a mechanical pressure $p=p(n)$ which is a given by a non linear function of the density $n$. Cell proliferation is modelled by a growth function $G=G(p)$ depending of the pressure. The displacement of the cells occurs with a velocity $v=v(x,t)$ related to the pressure gradient through Darcy's law. The model is written as follows,
\begin{align}
& \partial_t n + \nabla \cdot (n v) = n G(p),  \quad \mbox{ on }\mathbb{R}^+\times \mathbb{R}^d,  \label{eq:nH1} \\
& v = - \nabla p, \qquad p = p(n).  \label{eq:pH}
\end{align}
In \cite{PQV,PQTV,PV} the pressure is expressed as 
\begin{equation} \label{eq:ppH} 
p(n)= \frac{\gamma}{\gamma -1}n^{\gamma-1}.
\end{equation}  
Inserting \eqref{eq:ppH} into \eqref{eq:nH1}, \eqref{eq:pH} leads to the porous medium equation which has been widely studied \cite{Vel}. This model can be expressed as the gradient flow for the Wasserstein metric of the following energy,
\begin{equation}
 \mathcal{E}(n) = \int_{R^d}P(n(x)) ~ dx,  \label{eq:E1}
 \end{equation}
  where $P$ is a primitive of $p$, i.e. $\frac{\partial P}{\partial n}=p$. Indeed the functional derivative  $ \frac{\delta \mathcal{E}}{\delta n}$ of $\mathcal{E}$ with respect to the density $n$ acting on a density increment $\delta n$ is given by 
$$ <\frac{\delta \mathcal{E}}{\delta n},\delta n> = \lim_{h \rightarrow 0} \frac{1}{h}(\mathcal{E}(n+h\delta n) -\mathcal{E}(n)) = \int p(n)  \delta n ~ dx,$$
where $<.,.>$ is the $L^2$ duality bracket, so that we have $\frac{\delta \mathcal{E}}{\delta n}=p$.
So we can rewrite \eqref{eq:nH1} as the following equation
$$ \partial_t n + \nabla \cdot \big(n \nabla \frac{\delta \mathcal{E}}{\delta n} \big) = n G(p). $$
 The incompressible limit of this model corresponds to $\gamma \rightarrow +\infty$. It has been shown in \cite{HV} that this incompressible limit is a Hele-Shaw free boundary model which, classically, used to describe the pattern of tumor growth.
 In two dimensions, the classical Hele-Shaw problem models an incompressible viscous fluid squeezed between two parallel flat plates. As more fluid is injected, the region occupied by the fluid expands. It has been shown that the incompressible limits of many PDEs converge towards Hele-Shaw type models \cite{BBH,BI,EHKO,FriedmanHuang,GQ}. This incompressible limit of the corresponding Hele-Shaw model have been shown to be particularly relevant to tumor growth modelling \cite{PQV,PQTV}.
 
 In this paper, we present a new Gradient Flow Model (GFM) for two cells populations which is built upon the gradient flow framework presented above. We introduce a free energy $\mathcal{E}(n_1,n_2)$ that depends on the cell densities of each cell population $n_1$ and $n_2$. The free energy encompasses a term like \eqref{eq:E1} which depends on the total cell density $n = n_1+n_2$ and models both cell contact inhibition and packing. In addition we introduce active repulsion between unlike cells in order to enforce the segregation property expressed as $r=n_1n_2=0$ i.e. the two cell densities cannot be simultaneously non-zero. As this term induces repulsion forces and triggers instabilities, we also include regularising terms depending on the gradients $\nabla n_1$, $\nabla n_2$. We pursue two goals. The first one is the formal derivation of the incompressible limit model which takes the form of a two species Hele-Shaw Model (HSM). The second goal is to develop a numerical method for the GFM which enables us to illustrate the validity of the limit HSM.
 
The pressure law \eqref{eq:ppH} previously used in the literature does not prevent cells from overlapping. Indeed, with this expression, the cell density can take a value greater than $n=1$, where the value $n=1$ is supposed to be the maximal allowed cell density, corresponding to complete packing. In this paper we rather use the expression (instead of \eqref{eq:ppH}) 
 \begin{equation}
p_{\epsilon}(n)=\epsilon \frac{n}{1-n}.  \label{eq:p}
 \end{equation}
With this expression, the pressure has a singularity at $n=1$ which prevents the density to take values above $n=1$. Similar pressure laws have been used in \cite{HV}. The limit $ \gamma \rightarrow \infty$ is now replaced by $\epsilon \rightarrow 0$.

Systems with multiple populations are studied in many different areas. In chemistry, reaction-diffusion systems are used to model reacting chemical substances \cite{Dutt}. In population dynamics, these model are generalised into cross-diffusion systems in which the movement of one species can be induced by the gradient of the population of another species. In biology, Keller-Segel models \cite{KS} are used to model bacterial chemotaxis. Another classical example of cross-diffusion in biology is the Lotka-Volterra model \cite{Lotka}, which describes the dynamics of a predator-prey system. These have been extended to nonlinear diffusion Lotka-Volterra systems to model cell populations \cite{BGHP,BT}. In the context of tumor growth, systems with different types of cells have been studied (such as healthy/tumor cells, proliferative/quiescent cells \cite{BPM,BHIM,SC}). Among these models, some conserve the segregation property  \cite{BGHP,BT,GSV}, which is one of the characteristic we focus on. On the other hand, some models impose the species non mixing property. This is the case of the Cahn-Hilliard equation, which describes the process of phase separation \cite{ES}. In this model, each phase tends to regroup in one domain and can erase the other phase. In our model, the total number of cells of each species remains constant (in the absence of growth terms and up to possible boundary effects). Though, there are some similarities between our model and the Cahn-Hilliard equation. 

The paper is divided into the following five sections. In Section 2, the main results are exposed: the two population model is described; the formal incompressible limit theorem is stated; numerical simulations are shown in support and a discussion is provided. Then, Section 3 contains the derivation of the model from the free energy.  The formal proof of the convergence to the Hele Shaw free boundary problem is in Section 4. Finally, Section 5 is devoted to the description of the numerical scheme. A short conclusion is given in Section 6.

\section{Main results}

\subsection{Introduction of the gradient flow model}

In this paper, we consider two densities of cells denoted by $n_1(t,x)$ and $n_2(t,x)$ and the corresponding pressures $p_1(t,x)$ and $p_2(t,x)$ that depend on time $t\geq 0$
and position $x\in \mathbb{R}^d$. We derive our model from the single cell model exposed in the introduction and define the free energy $\mathcal{E}(n_1, n_2)$ expressed as 
follows:
\begin{equation}
 \mathcal{E}(n_1, n_2) = \int_{\mathbb{R}^d} P_{\epsilon}(n_1+n_2)dx + \int_{\mathbb{R}^d} Q_m(n_1n_2)dx + \frac{\alpha}{ 2} \int_{\mathbb{R}^d} ( | \nabla n_1|^2 + | \nabla n_2|^2)dx ,  \label{eq:E12}
 \end{equation}
 where $ \alpha$ is a diffusion parameter and $\epsilon, m$ are parameters of the pressure laws.
The first term corresponds to the pressure building up from the volume exclusion constraint and the second term is a repulsion pressure between the two different categories of cells. The last term represents cohesive energy penalising strong gradients of either cell densities.

 We assume that the two categories of cells have identical geometrical characteristics, so that the volume exclusion pressure resulting from either category of cells is similar and the total volume exclusion pressure is just a function of the total cell density. The expression of the pressure $p_{\epsilon}$ is given by \eqref{eq:p} and $\epsilon$ a parameter supposed to be small.
 The repulsion pressure is a novel aspect of the model and is defined by
\begin{equation}
q_m(r)= \frac{m}{m-1}(1+r)^{m-1},  \label{eq:q}
 \end{equation}
 with $q_m=Q'_m$.
This expression imposes segregation when $m$ is going to infinity. This can be seen thanks to the equality $ (1+r) q_m(r) = (1-\frac{1}{m})^{\frac{1}{m-1}} q ^{\frac{m}{m-1}} $. Passing to the limit $m \rightarrow \infty$, we obtain $r^{\infty}q^{\infty}=0$. Since $q^{\infty}>1$, this implies that $r^{\infty}=0$, which expresses the segregation property (the cell densities cannot be simultaneously non-zero).

For the sake of simplicity, we omit the parameters $ \epsilon, m, \alpha$ in the notations of unknown function $n_1$, $n_2$, $p_1$ and $p_2$. From the free energy \eqref{eq:E12}, in Section 3, we derive the following system of equations,
\begin{align}
& \partial_t{n_1} -  \nabla_x ({n_1} \nabla{p_1}) + \alpha \nabla_x ({n_1} \nabla ( \Delta {n_1}))= {n_1} G_1({p_1}),  \label{eq:n1} \\
&\partial_t{n_2} - \nabla_x ({n_2} \nabla {p_2}) + \alpha \nabla_x ({n_2} \nabla ( \Delta {n_2})) = {n_2} G_2({p_2}), \label{eq:n2} \\
& {p_1}=  p_{\epsilon}({n_1}+{n_2})+  {n_2} q_m({n_1}{n_2}), \label{eq:p1}\\
& {p_2}=  p_{\epsilon}({n_1}+{n_2})+  {n_1}q_m({n_1}{n_2}) . \label{eq:p2}
\end{align} 
where $G_1$ and $G_2$ are growth functions depending of the pressures $p_1$ and $p_2$, respectively.
The particular case $q_m =0$ and $\alpha = 0$ has been studied in \cite{BGHP,BT,GSV} where it has been shown that there exists a segregation property between the species, provided that the initial conditions are segregated \cite{BPM,BHIM,CFSS,KM}. The present paper treats a different case, as initially the two species may be mixed and the model drives them to a segregated state after some time, except for a thin interface depending on $\alpha$ and $m$. This is consistant with the biological observation that some mixing between the cell species occurs across the interface. The aim of this paper is to investigate the incompressible limit of this model \eqref{eq:n1}-\eqref{eq:p2}, which consists in letting $\epsilon$ and $\alpha$ going to 0 and $m$ going to infinity in the system.

\subsection{Formal limit}
In this section, we obtain convergence results of the model when $\epsilon, \alpha \rightarrow 0$ and  $m \rightarrow \infty$. First we list some assumptions.
As for the growth function, we assume:
\begin{equation}\label{hypG}
  \left\{
      \begin{aligned}
         &\exists\, G_m>0, \quad \| G_1 \|_{\infty} \leq G_m, \quad \| G_2 \|_{\infty} \leq G_m,\\
        & G_1', G_2' <0, \quad \mbox{ and }  \exists\, p^*_1, p^*_2>0, \quad G_1(p^*_1)=0 \mbox{ and } G_2(p^*_2)=0,\\
        & \exists\, \gamma>0, \quad \min(\min_{[0,p^*1]} |G_1'|,\min_{[0,p*1]} |G_2'| )= \gamma.\\
      \end{aligned}
    \right. 
\end{equation}
These assumptions stem from biological considerations. As the pressure increases in the tissue, cell division occurs less frequently, until eventually the pressure reaches a critical value, which either stops the growth or starts to trigger cell death.  
As for the initial conditions, we assume that there exists $\epsilon_0>0$ such that, for all $\epsilon\in (0,\epsilon_0)$,
\begin{equation}\label{hypini}
  \left\{
      \begin{aligned}
      & 0 \leq n_1^{\rm ini}, \quad 0 \leq n_2^{\rm ini}, \quad 0 \leq n^{\rm ini}:=n_1^{\rm ini}+n_2^{\rm ini},\\
       & p^{\rm ini}_{\epsilon}:= \epsilon \frac{n^{\rm ini}}{1-n^{\rm ini}} \leq p^* := \max(p^*_1,p^*_2). \\
             \end{aligned}
    \right.
\end{equation}
Thanks to these assumptions we can establish a priori estimates on ${n_1}$ and ${n_2}$. With a Stampaccchia method, we show the positivity of the densities. The $L^1$ bounds on ${n_1}$ and ${n_2}$ follow from this result. The initial $L^\infty$ bounds ensure that the densities stay below the critical value 1. However, in order to pass to the limit, we need more a priori estimates. In the following theorem, we only provide a formal limit as proof rigorous would require obtaining a priori estimates on the two densities, the pressure and on their derivatives in time and in space. The emergence of terms of the type $\nabla n \nabla r$ in the space derivatives of the densities makes it difficult to obtain required estimates since we are not able to control these terms. The same kind of problem is encountered for the time derivative.
 
 The main result is the stated following theorem.
 \begin{theo}\label{TH1} (Formal limit)
Let $T>0$, $Q_T=(0,T)\times \mathbb{R}^d$. 
Let $G_1$, $G_2$ and $n^{\rm ini}$, $n_1^{\rm ini}$, $n_2^{\rm sini}$ satisfy assumptions \eqref{hypG} and \eqref{hypini}. Suppose the limits of the densities ${n_1}$, ${n_2}$ , of the pressure $p_{\epsilon}$ and of $q_m$  as $\epsilon \rightarrow 0$, $m \rightarrow +\infty $ and $\alpha \rightarrow 0$ exist and are denoted by $n_1^{\infty}, n_2^{\infty} $, $p^{\infty} $ and $q^{\infty} $. If the convergence is strong enough then these limits satisfy
\begin{align}
\partial_t n_1^{\infty} - \nabla \cdot ( n_1^{\infty} \nabla p_1^{\infty}) = n_1^{\infty} G_1( p_1^{\infty}), \label{eq:n10} \\
\partial_t n_2^{\infty} - \nabla \cdot ( n_2^{\infty} \nabla  p_2^{\infty}) = n_2^{\infty} G_2( p_2^{\infty}),   \label{eq:n20} \\
\partial_t n^{\infty} - \Delta  p^{\infty} = n_1^{\infty} G_1( p_1^{\infty}) +n_2^{\infty} G_2( p_2^{\infty}),   \label{eq:n0} 
\end{align}
with $$ p_1^{\infty}= p^{\infty}+n_2^{\infty} q^{\infty} \quad \text{ and } \quad p_2^{\infty}= p^{\infty}+n_1^{\infty} q^{\infty} .$$
In addition, we have
\begin{align}\label{n0p0}
(1-n^{\infty})p^{\infty}=0,
\end{align}
with $ n^{\infty} = n_1^{\infty} +n_2^{\infty}$,
 \begin{align}\label{segre}
n_1^{\infty}n_2^{\infty}=0,
\end{align}
and the complementary relation
\begin{align} \label{compl}
 (p^{\infty})^2(\Delta p^{\infty}+ n_1^{\infty}G_1(p^{\infty})+ n_2^{\infty}G_2(p^{\infty})) =0.
\end{align}
\end{theo}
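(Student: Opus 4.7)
The plan is to pass to the limit formally in the system \eqref{eq:n1}--\eqref{eq:p2} under the assumed strong convergence, and to deduce the algebraic side conditions \eqref{n0p0}--\eqref{compl} by manipulating the explicit forms of $p_\epsilon$ and $q_m$. First I would pass to the limit directly in \eqref{eq:n1} and \eqref{eq:n2}: the $\alpha$-regularisation drops out as $\alpha\to 0$ (provided $n_i\nabla\Delta n_i$ remains bounded), the products $n_i\nabla p_i$ converge to $n_i^\infty\nabla p_i^\infty$ by the assumed strong convergence, and the pressure identities \eqref{eq:p1}--\eqref{eq:p2} pass to the limit to give $p_1^\infty=p^\infty+n_2^\infty q^\infty$ and $p_2^\infty=p^\infty+n_1^\infty q^\infty$. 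This yields equations \eqref{eq:n10} and \eqref{eq:n20}.

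Next I would establish the two algebraic constraints. Rewriting the pressure law \eqref{eq:p} as $(1-n)p_\epsilon(n)=\epsilon n$ and letting $\epsilon\to 0$ gives \eqref{n0p0} at once. For the segregation \eqref{segre} I would use the identity stated after \eqref{eq:q},
\[
(1+r)\,q_m(r)=\bigl(1-\tfrac{1}{m}\bigr)^{1/(m-1)}\,q_m(r)^{m/(m-1)},
\]
which upon letting $m\to\infty$ becomes $(1+r^\infty)q^\infty=q^\infty$, i.e.\ $r^\infty q^\infty=0$; since $q_m(r)\ge m/(m-1)>1$ forces $q^\infty\ge 1>0$, this yields $r^\infty=n_1^\infty n_2^\infty=0$.

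Equation \eqref{eq:n0} then follows by summing \eqref{eq:n10} and \eqref{eq:n20}. A direct computation using $p_i^\infty=p^\infty+n_{3-i}^\infty q^\infty$ gives
\[
n_1^\infty\nabla p_1^\infty+n_2^\infty\nabla p_2^\infty = n^\infty\nabla p^\infty + q^\infty\nabla r^\infty + 2r^\infty\nabla q^\infty,
\]
and by \eqref{segre} both cross terms vanish distributionally, leaving $n^\infty\nabla p^\infty$; on $\{p^\infty>0\}$, \eqref{n0p0} forces $n^\infty=1$ so $\nabla\!\cdot\!(n^\infty\nabla p^\infty)=\Delta p^\infty$, while on $\{p^\infty=0\}$ both quantities vanish, which gives \eqref{eq:n0}. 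For the complementary relation \eqref{compl} I would derive an equation for $p_\epsilon$ itself: using $p_\epsilon=p_\epsilon(n)$ with $p_\epsilon'(n)=(p_\epsilon+\epsilon)^2/\epsilon$ and the sum of \eqref{eq:n1}--\eqref{eq:n2}, I obtain
\[
\epsilon\,\partial_t p_\epsilon = (p_\epsilon+\epsilon)^2\Bigl[\nabla\!\cdot\!(n_1\nabla p_1+n_2\nabla p_2) - \alpha\,\nabla\!\cdot\!(n_1\nabla\Delta n_1+n_2\nabla\Delta n_2) + n_1 G_1(p_1)+n_2 G_2(p_2)\Bigr].
\]
As $\epsilon,\alpha\to 0$ and $m\to\infty$, the left-hand side tends to zero while $(p_\epsilon+\epsilon)^2\to(p^\infty)^2$; the bracket collapses on $\{p^\infty>0\}$ to $\Delta p^\infty+n_1^\infty G_1(p^\infty)+n_2^\infty G_2(p^\infty)$ by the previous simplifications (segregation makes $p_i^\infty=p^\infty$ on the support of $n_i^\infty$), which yields \eqref{compl}.

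The main obstacle, and the reason the statement is only formal, is the absence of a priori bounds uniform in $\epsilon,\alpha,m$ on the cross products $\nabla n\cdot\nabla r$, on the higher-order term $\nabla\Delta n_i$, and on $\partial_t p_\epsilon$. These are precisely the quantities that a rigorous passage to the limit would need in order to justify convergence of the nonlinear products $(p_\epsilon+\epsilon)^2\,\nabla\!\cdot\!(n\nabla p_\epsilon)$ and $n_i\nabla p_i$; the dissipative structure associated with the free energy \eqref{eq:E12} alone does not seem to control them, so at this stage only a formal limit can be asserted.
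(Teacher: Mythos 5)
Your proposal is correct as a formal argument and follows the paper's overall strategy closely: the algebraic derivations of \eqref{n0p0} and \eqref{segre} from $(1-n)p_\epsilon=\epsilon n$ and from the identity $(1+r)q_m=(1-\tfrac1m)^{1/(m-1)}q_m^{m/(m-1)}$ are exactly the paper's, as is the key trick for \eqref{compl} of multiplying the total-density equation by $p_\epsilon'(n)=(p_\epsilon+\epsilon)^2/\epsilon$ so that $\epsilon\,\partial_t p_\epsilon$ disappears in the limit. Where you genuinely diverge is in how the cross-diffusion terms are reduced to $\Delta p^\infty$: you sum the two \emph{limit} equations and cancel $q^\infty\nabla r^\infty+2r^\infty\nabla q^\infty$ using $r^\infty\equiv 0$, then use $(1-n^\infty)p^\infty=0$ to replace $\nabla\!\cdot\!(n^\infty\nabla p^\infty)$ by $\Delta p^\infty$, whereas the paper performs the reduction \emph{before} passing to the limit, by observing that the flux is an exact gradient, $n\nabla p_\epsilon+2r\nabla q_m(r)+q_m(r)\nabla r=\nabla H(n,r)$ with $H$ given in \eqref{eq:H}, and then only needs the scalar limit $H\to p^\infty$ (via $(r+1)^m=\tfrac{m-1}{m}(1+r)q_m(r)$) to get \eqref{eq:n0} and, after multiplying \eqref{eq:nH} by $\epsilon p_\epsilon'(n)$, the relation \eqref{compl}. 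The paper's potential $H$ buys robustness precisely on the point it flags as the obstacle to a rigorous proof: the products $q_m(r)\nabla r$ and $r\nabla q_m(r)$ blow up with $m$ wherever $r>0$, so their limits are not individually controlled, and packaging them into $\Delta H$ sidesteps interchanging limits with these singular products; your route is shorter but loads that difficulty entirely onto the "strong enough convergence" hypothesis (convergence of $n_i\nabla p_i$ and of the bracket in your $p_\epsilon$-equation), which you correctly acknowledge in your closing caveat. Since both arguments are only formal, your version is acceptable, but the $H$-reformulation is the paper's main structural insight and would be the natural starting point for any rigorous version.
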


Eq. \eqref{n0p0}  suggests to decompose the domain in two parts. We consider the domain $ \Omega(t) =\{ x ~|~  {p}^{\infty} (x,t)>0 \}$ and the complementary domain, where the pressure is equal to 0. Notice that $\Omega(t) \subset \{ x ~|~ {n}^{\infty}(x,t) =1 \}$. Moreover, the two domains coincide almost everywhere. Indeed if ${n}^{\infty}=1$ and ${p}^{\infty}=0$, the density will continue to grow and the total density will become greater than the maximum packing value. Thanks to the segregation property \eqref{segre} we can decompose $\Omega(t)$ in two subdomain, $ \Omega_1(t) = \Omega(t) \cap \{ x ~|~ {n_1}^{\infty}(x,t) =1 \}$ and $ \Omega_2(t) = \Omega(t) \cap \{  x ~|~{n_2}^{\infty}(x,t) =1 \}$. Then it is verified that $ \Omega_1(t)$ and $ \Omega_2(t)$ are disjoint and that their reunion is $ \Omega(t)$. It is interesting to remark that

\begin{equation} \label{p1inf}
p_1^{\infty}= \left\{
      \begin{aligned}
       p^{\infty} \text{ in } \Omega_1(t),\\
       p^{\infty}+ q^{\infty} \text{ in } \Omega_2(t),\\
        0 \text{ outside } \Omega_1(t)\cup \Omega_2(t),
      \end{aligned}
    \right.
\end{equation}
\begin{equation} \label{p2inf}
p_2^{\infty}= \left\{
      \begin{aligned}
       p^{\infty}+ q^{\infty} \text{ in } \Omega_1(t),\\
       p^{\infty} \text{ in } \Omega_2(t),\\
        0 \text{ outside } \Omega_1(t)\cup \Omega_2(t),
      \end{aligned}
    \right.
\end{equation}
with $  q^{\infty} = 1$.

 The equation for the pressure inside  $ \Omega(t)$ is deduced from the complementary relation \eqref{compl}:
\begin{align*}
\Delta p^{\infty} + n_1^{\infty}G_1(p^{\infty})+ n_2^{\infty}G_2(p^{\infty}) = 0 \text{ on } \Omega(t).
\end{align*}
Since the limit pressure is continuous, the pressure is equal to zero on the boundary $\partial \Omega(t)$. However the normal derivative of the pressure is non-zero and controls the movement of the domain boundary. Knowing that in the system \eqref{eq:n1} - \eqref{eq:p2} the densities $n_1$ and $n_2$ diffuse with velocities $\nabla p_1$ and $\nabla p_2$ respectively and given that at the limit, according to \eqref{p1inf} and \eqref{p2inf}, these two pressures are equal to $p^{\infty}$, we can deduce that the domain boundary and the interface are moving with a normal velocity,
\begin{equation} \label{V}
 V_{\upsilon} = - \nabla p^{\infty} \cdot \upsilon  ,
 \end{equation}
 where $\upsilon$ is the outward normal  vector.
 
 \subsection{Numerical validation}
 
\subsubsection{Analytical solution of the Hele-Shaw model}

The HSM is characterised by the complementary relation \eqref{compl} and the velocity of the boundary given by  \eqref{V}. In the one-dimensional case (1D), the solution of this problem can be computed. We will consider $G(p) = g(p^*-p)$, with $p^*$ the maximum pressure and $g$ the growth rate, since it is one of the most common growth term in literature. In the case of one population, the complementary relation \eqref{compl} in 1D can be rewritten as
\begin{align}\label{CR1D}
 - p''(x) + g p = g p^* \mbox{ on }  \Omega(t).
 \end{align}
The solutions of this problem are of the form
   $$ p(x)= A e^{\sqrt g x}+B e^{-\sqrt g x}+ p^*,$$
   with constants A, B  depending on the boundary conditions. We compute the exact solution in two different cases which are going to be used for the numerical validation of the model.
   
   \paragraph{Example 1} We first consider the case where the two species have the same growth term
   \begin{align}\label{ex1:G}
   G_1(p) =G_2(p) =  g(p^*-p),
   \end{align}
    with one species surrounded by the other one. The density $n_2^{\mbox{\scriptsize ini}}$ is defined as the indicator function of $[x^{\mbox{\scriptsize ini}}_{\Gamma^-};x^{\mbox{\scriptsize ini}}_{\Gamma^+} ]$ and the density $n_1^{\mbox{\scriptsize ini}}$ is defined as the indicator function of $[x^{\mbox{\scriptsize ini}}_{\Sigma^-};x^{\mbox{\scriptsize ini}}_{\Sigma^+} ] \setminus{ [x^{\mbox{\scriptsize ini}}_{\Gamma^-};x^{\mbox{\scriptsize ini}}_{\Gamma^+} ]}$ where $x^{\mbox{\scriptsize ini}}_{\Gamma^-}, x^{\mbox{\scriptsize ini}}_{\Gamma^+}$ represents the interfaces between the two populations and $x^{\mbox{\scriptsize ini}}_{\Sigma^-},x^{\mbox{\scriptsize ini}}_{\Sigma^+}$ represent the exterior boundaries of the total density. More specifically,
   
  \begin{align}\label{ex1:n}
  n_1^{\mbox{\scriptsize ini}} (x) = \mathbbm{1}_{[x^{\mbox{\scriptsize ini}}_{\Sigma^-};x^{\mbox{\scriptsize ini}}_{\Gamma^-} ] }(x) + \mathbbm{1}_{[x^{\mbox{\scriptsize ini}}_{\Gamma^+};x^{\mbox{\scriptsize ini}}_{\Sigma^+} ] }(x) \quad \text{and} \quad n_2^{\mbox{\scriptsize ini}} (x) = \mathbbm{1}_{[x^{\mbox{\scriptsize ini}}_{\Gamma^-};x^{\mbox{\scriptsize ini}}_{\Gamma^+} ] } (x),
     \end{align}
   where $\mathbbm{1}_S$ is the indicator function of the set $S$.
   Since the two populations have the same growth function, the complementary relation can be treated as that of a single population of cells \eqref{CR1D} with the boundary conditions $p(x_{\Sigma^-})=0$ and $p(x_{\Sigma^+})=0$ at any time. A simple computation shows that 
 $$ p(x)=p^* (1-\frac{ \cosh(\sqrt g( \frac{x_{\Sigma^-}+x_{\Sigma^+}}{2}-x)}{ \cosh(\sqrt g \frac{x_{\Sigma^-}-x_{\Sigma^+}}{2}}) ,$$
 where $\cosh$ and $\sinh$ stand for the hyperbolic cosine and sine.
 The velocities of the exterior boundaries and of the interfaces can be easily computed:
 $$ V_{\Sigma^-} = - p^* \sqrt g \frac{ \sinh(\sqrt g\frac{x_{\Sigma^-}-x_{\Sigma^+}}{2})}{ \cosh(\sqrt g \frac{x_{\Sigma^-}-x_{\Sigma^+}}{2})} \quad \text{and} \quad
V_{\Sigma^+} =  p^* \sqrt g \frac{ \sinh(\sqrt g\frac{x_{\Sigma^-}-x_{\Sigma^+}}{2})}{ \cosh(\sqrt g \frac{x_{\Sigma^-}-x_{\Sigma^+}}{2})},$$
and
  $$ V_{\Gamma^-} = p^* \sqrt g \frac{ \sinh(\sqrt g( \frac{x_{\Sigma^-}+x_{\Sigma^+}}{2}-x_{\Gamma^-})}{ \cosh(\sqrt g \frac{x_{\Sigma^-}-x_{\Sigma^+}}{2})} \quad \text{and} \quad
V_{\Gamma^+} =  p^* \sqrt g \frac{ \sinh(\sqrt g( \frac{x_{\Sigma^-}+x_{\Sigma^+}}{2}-x_{\Gamma^+})}{ \cosh(\sqrt g \frac{x_{\Sigma^-}-x_{\Sigma^+}}{2})}.$$
Then $x_{\Sigma^{\pm}}$ and $x_{\Gamma^\pm}$ evolve respectively according to $ \frac{d }{dt} x_{\Sigma^{\pm}} = V_{\Sigma^\pm}$ and $ \frac{d }{dt} x_{\Gamma^{\pm}} = V_{\Gamma^\pm}$. Since initially $ x^{\mbox{\scriptsize ini}}_{\Sigma^-} <x^{\mbox{\scriptsize ini}}_{\Gamma^-} <x^{\mbox{\scriptsize ini}}_{\Gamma^+} < x^{\mbox{\scriptsize ini}}_{\Sigma^+} $ the density spreads. However $ |V_{\Gamma^-}| \leq V_{\Sigma^-} $ and $ |V_{\Gamma^+}| \leq V_{\Sigma^-} $ so the interface is moving more slowly than the exterior boundary. This means that the density $n_1$ is not only transported but also spreads. The density $n_2$ spreads and simultaneously pushes $n_1$.

  \paragraph{Example 2} We now consider two species having only one contact point, with different growth terms 
     \begin{align}\label{ex2:G}
     G_1(p)=g_1(p_1^*-p) \quad \mbox{and} \quad  G_2(p)=g_2(p_2^*-p).
   \end{align} 
 The initial densities are defined as indicator functions of $[x^{\mbox{\scriptsize ini}}_{\Sigma^-};x^{\mbox{\scriptsize ini}}_{\Gamma} ]$ and $[x^{\mbox{\scriptsize ini}}_{\Gamma};x^{\mbox{\scriptsize ini}}_{\Sigma^+} ]$ where $x^{\mbox{\scriptsize ini}}_{\Sigma^-},x^{\mbox{\scriptsize ini}}_{\Sigma^+}$ define the boundary of the total density and $x^{\mbox{\scriptsize ini}}_{\Gamma}$ defines the interface between the two densities. More specifically,
     \begin{align}\label{ex2:n}
      n_1^{\mbox{\scriptsize ini}} (x) = \mathbbm{1}_{[x^{\mbox{\scriptsize ini}}_{\Sigma^-};x^{\mbox{\scriptsize ini}}_{\Gamma} ] }(x) \quad \mbox{and} \quad n_2^{\mbox{\scriptsize ini}} (x) = \mathbbm{1}_{[x^{\mbox{\scriptsize ini}}_{\Gamma};x^{ini}_{\Sigma^+} ] } (x).
       \end{align} 
   The pressure follows the complementary relation \eqref{compl}, which can be rewritten as
     \begin{equation*}
    \left\{
      \begin{aligned} - p''(x) + g_1 p = g_1 p^*_1  & \mbox{ in }  \Omega_1= [x_{\Sigma^-};x_{\Gamma} ],\\
  - p''(x) + g_2 p = g_2 p^*_2  &\mbox{ in }  \Omega_2= [x_{\Gamma};x_{\Sigma^+} ],
    \end{aligned}
    \right.
\end{equation*}
  with the additional conditions  $p(x_{\Sigma^-})=0$,  $p(x_{\Sigma^+})=0$, and $p$ and $p'$ are continuous at $x_{\Gamma}$. After some computations we find,
  \begin{equation*}
   p(x)=
  \left\{
      \begin{aligned}
       & 2A_1 e^{\sqrt g_1 x_{\Sigma^-}} \sinh(\sqrt g_1(x-x_{\Sigma^-}))+p_1^*(1-e^{-\sqrt g_1(x-x_{\Sigma^-})})  \quad \text{in } [x_{\Sigma^-};x_{\Gamma} ],\\
        & 2A_2 e^{\sqrt g_2 x_{\Sigma^+}} \sinh(\sqrt g_2(x-x_{\Sigma^+}))+p_1^*(1-e^{-\sqrt g_2(x-x_{\Sigma^+})})  \quad \text{in } [x_{\Gamma};x_{\Sigma^+} ]  ,\\
       & 0  \quad \text{ in } (-\infty,x_{\Sigma^-}] \cup [x_{\Sigma^+},+\infty),
      \end{aligned}
    \right.
\end{equation*}
  with 
    \begin{equation*}
  \begin{array}{rrl}
 A_1 =  \frac{ e^{\sqrt g_2 x_{\Sigma^-}}}{D} (&p_1 (\sqrt{g_2}(1-e^{-\sqrt{g_1}*(x_{\Gamma}-x_{\Sigma^+})}) & \cosh(\sqrt{g_2}(x_{\Gamma}-x_{\Sigma^-}))  \\
   & -\sqrt{g_1} e^{-\sqrt{g_2} (x_{\Gamma}-x_{\Sigma^+})} & \sinh(\sqrt{g_2}(x_{\Gamma}-x_{\Sigma^-})))  \\
    & +p_2 \sqrt{g_2} (-(1-e^{-\sqrt{g_2}(x_{\Gamma}-x_{\Sigma^-})}) & \cosh(\sqrt{g_2}(x_{\Gamma}-x_{\Sigma^-}))   \\
   & +e^{-\sqrt{g_2}(x_{\Gamma}-x_{\Sigma^-})} & \sinh(\sqrt{g_2}(x_{\Gamma}-x_{\Sigma^-})))) ,
\end{array}
\end{equation*}
and
    \begin{equation*}
  \begin{array}{rrl}
 A_2 =  \frac{ e^{\sqrt g_1 x_{\Sigma^+}}}{D}(&p_1\sqrt{g_1}((1-e^{-\sqrt{g_1}(x_{\Gamma}-x_{\Sigma^-})})  & \cosh(\sqrt{g_1}(x_{\Gamma}-x_{\Sigma^-})) \\
  &- e^{-\sqrt{g_1} (x_{\Gamma}-x_{\Sigma^-})} & \sinh(\sqrt{g_1}(x_{\Gamma}-x_{\Sigma^-})))  \\
  &+ p_2( \sqrt{g_2}(1-e^{-\sqrt{g_2}(x_{\Gamma}-x_{\Sigma^+})}) & \cosh(\sqrt{g_2}(x_{\Gamma}-x_{\Sigma^+}))   \\
  & -  \sqrt{g_1} e^{-\sqrt{g_2}(x_{\Gamma}-x_{\Sigma^+})} & \sinh(\sqrt{g_1}(x_{\Gamma}-x_{\Sigma^-})))).
  \end{array}
\end{equation*}
From this expression we can compute the speed of the exterior boundary and of the interface by computing the derivative of the pressure and evaluating it at $x^{\mbox{\scriptsize ini}}_{\Sigma^-},x^{\mbox{\scriptsize ini}}_{\Sigma^+}$ and $x^{\mbox{\scriptsize ini}}_{\Gamma}$.

\subsubsection{Numerical validation}

In this section we use the numerical scheme presented in Section 4 to illustrate that the solution of the GFM converges to the corresponding solution of the HSM described in the previous section 2.3.1. In order to facilitate the comparison with the analytical solutions of the HSM shown in the previous section, the simulations are performed in one dimension. To compare the two models we initialize them with the same initial configuration: the densities are taken as segregated indicator functions. We consider the two different cases with different growth functions and different initial conditions, corresponding to Examples 1 and 2 of the previous section.

The parameters are $\epsilon =0.1$, $m =10$, $\alpha =0.01$. Another parameter which is introduced for the Relaxation Gradient Flow Model (RGFM) in Section 4 is given the value $\nu =0.001$. We plot on the same figure the numerical solutions of the GFM (solid line) and the HSM (dashed line). The species $n_1$ and $n_2$ are plotted in red and blue, respectively.

The first case defined as Example 1 in Section 2.3 is illustrated in Fig. \ref{fig1}. We use the growth function \eqref{ex1:G} with the parameters $g=1$ and $p^*=10$. The boundary and interface are taken as $x^{\mbox{\scriptsize ini}}_{\Sigma^\pm} = \pm 1.4$ and $x^{\mbox{\scriptsize ini}}_{\Gamma^\pm} = \pm 0.6$. Then the initial density of the HSM are defined by the formula \eqref{ex1:n}. As initial conditions of the GMF, we take 
 \begin{align}\label{ex1:nGFM}
 n_1^{ini} (x)= 0.98 ~ ( \mathbbm{1}_{[-1.4;-0.6 ] }(x)+\mathbbm{1}_{[0.6;1.4 ] }(x)) \quad \mbox{and} \quad n_2^{ini}(x) = 0.98 ~ \mathbbm{1}_{[-0.6;0.6 ] } (x).
 \end{align} 
We take  $n_1=n_2=0.98$ initially as it is close to the singular value 1. In this example the solution of the GFM is expected to be close to the particular solution of the HSM given in Example 1.
   
   \begin{figure}[!h]
   \centering
\includegraphics[width=1\linewidth]{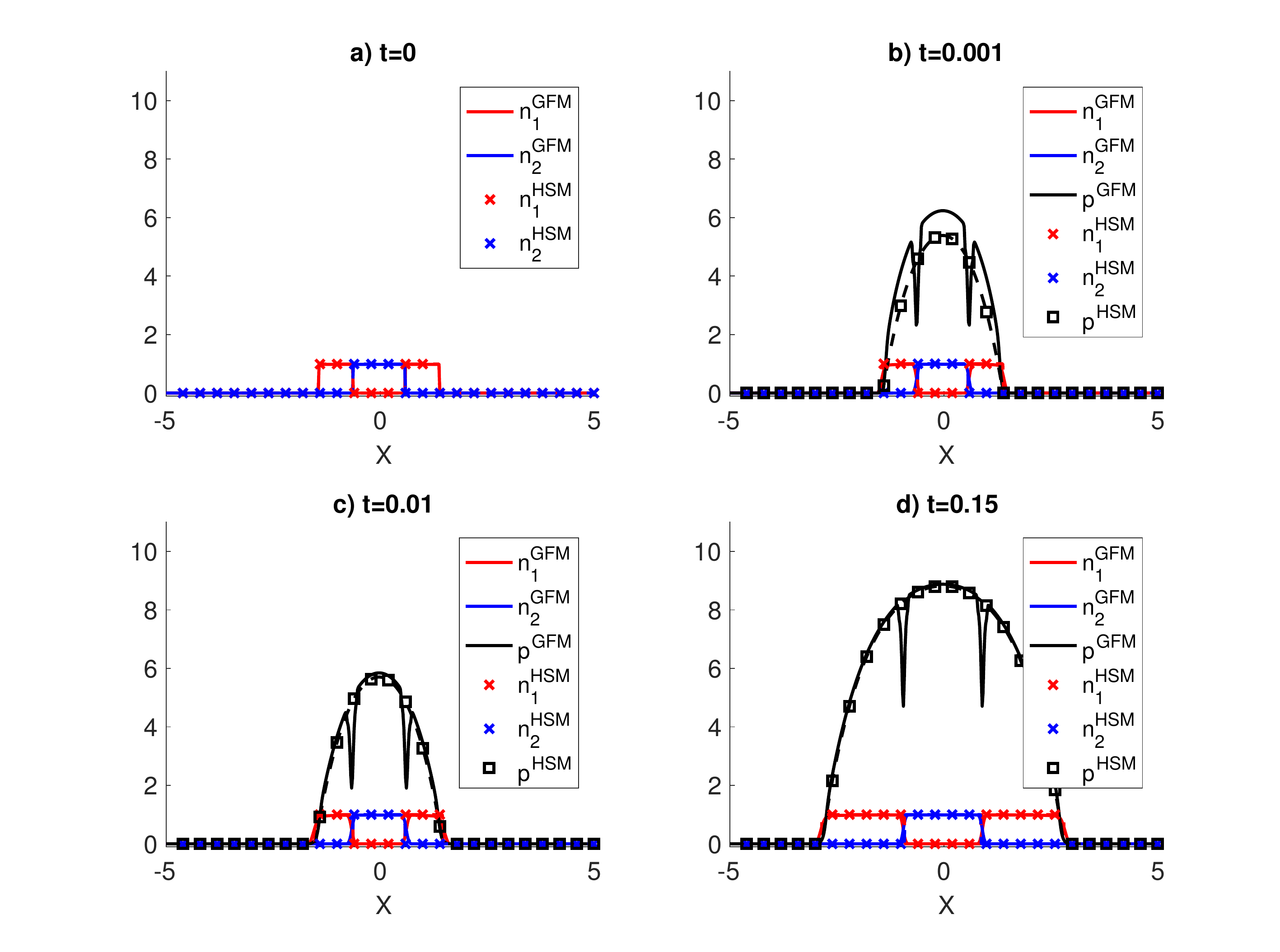}
\caption{Snapshots of densities $n_1$ (red), $n_2$ (blue) and pressure $p$ (black) for GFM (solid line) and HSM (dashed line with marker) from Example 1 at different times: (a) $t=0$ , (b)  $t=0.001$, (c) $t=0.01$, (d) $t=0.15$. Initial conditions: density of the GFM from Eq.~\eqref{ex1:nGFM}, density of the HSM from Eq.~\eqref{ex1:n}, growth functions from Eq.~\eqref{ex1:G}.}
\label{fig1}
\end{figure}
   
   The second case is illustrated in Fig. \ref{fig2}. The growth functions are defined by \eqref{ex2:G} with the parameters $g_1=g_2=1$, $p_1^*=20$ and $p_2^*=10$.  As initial conditions of the GMF we take 
    \begin{align}\label{ex2:nGFM}
    n_1^{ini} (x)= 0.98 ~ \mathbbm{1}_{[x^{\mbox{\scriptsize ini}}_{\Sigma^-};x^{\mbox{\scriptsize ini}}_{\Gamma} ] }(x) \quad \mbox{and} \quad  n_2^{ini}(x) = 0.98 ~ \mathbbm{1}_{[x^{\mbox{\scriptsize ini}}_{\Gamma};x^{\mbox{\scriptsize ini}}_{\Sigma^+} ] } (x),
     \end{align} 
    where $x^{\mbox{\scriptsize ini}}_{\Sigma^\pm} = \pm 1.4$ and $x^{\mbox{\scriptsize ini}}_{\Gamma} =  0$. The initial density of the HSM are defined with formula \eqref{ex2:n}. In this example the solution of the GFM is expected to be close to the  the particular solution of the HSM given in Example 2.    
   \begin{figure}[!h]
   \centering
\includegraphics[width=1\linewidth]{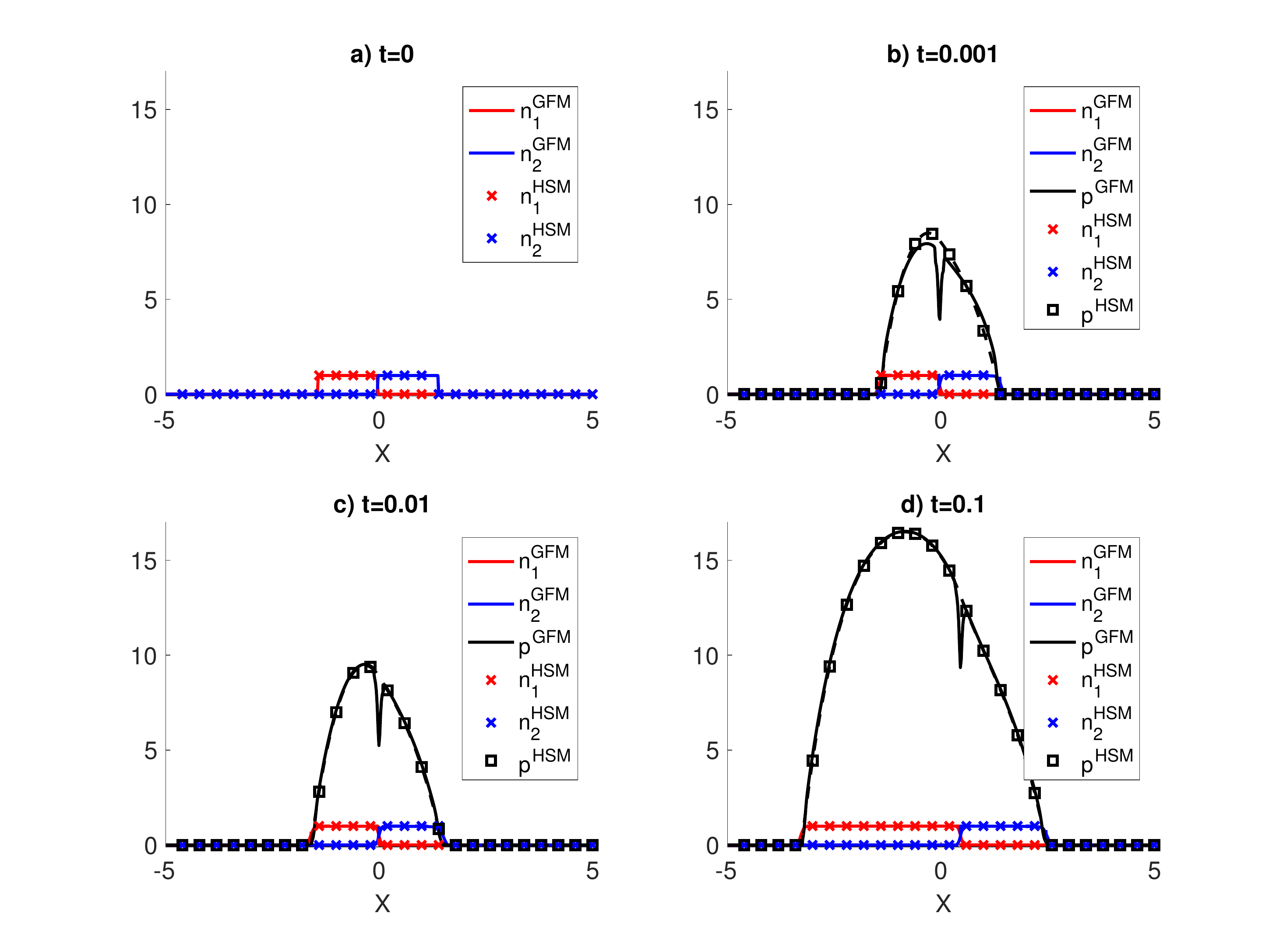}
\caption{Snapshots of densities $n_1$ (red), $n_2$ (blue) and pressure $p$ (black) for GFM (solid line) and HSM (dashed line with marker) from Example 2 at different times: (a) $t=0$ , (b)  $t=0.001$, (c) $t=0.01$, (d) $t=0.15$. Initial conditions: density of the GFM from Eq.~\eqref{ex2:nGFM}, density of the HSM from Eq.~\eqref{ex2:n}, growth functions from Eq.~\eqref{ex2:G}.
 }
\label{fig2}
\end{figure}

We can draw the same conclusions about the approximation of the HSM by the GFM in the two sets of simulations displayed in Figs.  \ref{fig1} and  \ref{fig2} (where panel (a) is for the initial configuration and panels (b), (c), (d) are for times $t=0.001$, $t=0.01$, $t=0.15$ or $t=0.25$ respectively). First of all, we notice that the approximation is excellent except for the pressure at small times, and at the interface between the two cell population. While the pressure of the HSM is smooth wherever the total density is positive, the pressure of the GFM exhibits sharp ditches at the interfaces between the two populations. Otherwise the pressures given by the two models are similar except at small times. In Fig.~\ref{fig1}(b), the pressure of the GFM is larger than that of the HSM, whereas on Fig.~\ref{fig2}(b) it is the opposite. However, we do not observe any impact of this difference on the densities. On Figs.~\ref{fig2}(c) and \ref{fig2}(d), the two pressures are almost the same for the two models. Overall, the dynamics of the GFM and the HSM are quite similar. In Fig.~\ref{fig1}, we observe that the blue (inner) species pushes the red (outer) species in order to be allowed to grow. In Fig.~\ref{fig2}, the red species (leftmost) which has the biggest growth rate pushes the blue species (rightmost) toward the right. Therefore by growing more rapidly, the red species exert a bigger pressure on the blue species than the blue species exerts on the red species. This pressure imbalance which is visible on Fig.~\ref{fig2}, triggers the motion of the interface towards regions of lower pressure.

\subsubsection{Discussion}

The results of the simulations exposed in Figs.~\ref{fig1} and \ref{fig2} show that the dynamics of the two models are almost identical after some time. However, at the beginning of the simulation we observe some differences between the pressure of the GFM and the HSM. We notice that initially in the GFM, the densities are taken as indicator functions, so at the start the pressure $p_{\epsilon}$ of the HSM is also an indicator function whereas that of the HSM is continuous. It takes some time for the scheme to smooth it out, and then the two pressures coincide. In addition, the initial value of the GFM in these simulations is fixed at $98\%$ of the packing density 1.  The growth term produces an increase of the densities to their upper bound value. The densities equate then to $n_M= p_{\epsilon}^{-1}(p^*)= \frac{p^*}{\epsilon +p^*}$ (because the pressure stays below $p^* = \max(p_1^*,p_2^*)$, see also \eqref{hypini}). During this period the GFM and the HSM are not synchronized. After a short transient, we observe that the pressures of the two models have the same shape.

 The major difference between the two models is at the interfaces between the two populations. As result of the fourth order term, the densities of the GFM are not perfectly segregated and overlap only in a narrow spatial interval. When the parameter $\alpha$ tends to 0, the system amplifies the effect of segregation. However in the numerical simulation, the parameter $\alpha = 0.01$ allows some mixing of the two populations. This mixing is observed at the interface and creates some small disturbances on the total density (i.e. the total density is not constantly equal to $n_M$), which are then amplified in the pressure. Despite this, the speeds of the interfaces and of the boundaries seem to be close in the two models. These are quite remarkable results, considering that the parameters are not yet in the asymptotic ranges where the two models should be identical. 
 
However, it is important to remark that the simulations have only been made in the case of initially segregated populations. This was necessary to initialize the free boundary model. In the case of mixed population, simulations of the GFM have been made and can be found in Section 6. They confirm the convergence of the GFM towards a free boundary model since we observe that the system self-organizes into separated domains containing the different populations. However we can't study the convergence towards the HSM since we do not know beforehand which initial condition for the HSM will correspond to the the converged GFM.

\section{Derivation of the gradient flow model}

Eqs. \eqref{eq:n1} and \eqref{eq:n2} can be derived from the gradient flow associated with the free energy  \eqref{eq:E12}. Indeed the functional derivatives $\frac{\delta \mathcal{E}}{\delta n_1}$ and $\frac{\delta \mathcal{E}}{\delta n_2}$ of $\mathcal{E}$ with respect to $n_1$ and $n_2$, acting on a density increment $\delta n_1(x)$ and $\delta n_2(x)$ are given by

\begin{align}
 \frac{\delta \mathcal{E}}{\delta n_1} =  p_{\epsilon}(n_1+n_2)+  n_2 q_m(n_1n_2) - \alpha \Delta n_1 = {p_1}_{\epsilon,m}(n_1,n_2) - \alpha \Delta n_1 , \label{eq:dedn1} \\
 \frac{\delta \mathcal{E}}{\delta n_2} = p_{\epsilon}(n_1+n_2)+  n_1 q_m(n_1n_2) - \alpha \Delta n_2 = {p_2}_{\epsilon,m}(n_1,n_2) - \alpha \Delta n_2. \label{eq:dedn2}
 \end{align}
Indeed, the computation for the first derivative is given by
\begin{eqnarray*} 
 <\frac{\delta \mathcal{E}}{\delta n_1}, \delta n_1>&:=& \int_{\mathbb{R}^d} \frac{\delta \mathcal{E}}{\delta n_1} (n_1,n_2) ~ \delta n_1 dx \\ 
 &=& \lim_{h\rightarrow0} \frac{1}{h} \Big(\mathcal{E}(n_1+h ~ \delta n_1,n_2) - \mathcal{E}(n_1,n_2) \Big)  \\
& =&  \lim_{h\rightarrow0} \frac{1}{h}  \int_{\mathbb{R}^d} \Big( P_{\epsilon}(n_1+ h ~ \delta n_1+n_2)(x)- P_{\epsilon}(n_1+n_2)(x) \Big) ~dx  \\
 & &+  \lim_{h\rightarrow0} \frac{1}{h} \int_{\mathbb{R}^d} \Big(Q_m((n_1+h ~ \delta n_1)n_2)(x)-Q_m(n_1n_2)(x) \Big)~dx  \\
 & &+  \lim_{h\rightarrow0} \frac{\alpha}{2h} \int_{\mathbb{R}^d} \Big( |\nabla(n_1 +h \delta n_1)|^2 - |\nabla n_1 |^2 \Big) ~dx  \\
& =&\int_{\mathbb{R}^d} \Big( \frac{\partial P_{\epsilon}}{\partial n_1}(n_1+n_2)+ \frac{\partial Q_m}{\partial n_1}(n_1n_2) + \alpha \nabla n_1 \nabla \delta n_1 \Big) ~  \delta n_1 dx \\
& =&\int_{\mathbb{R}^d} \Big( p_{\epsilon}(n_1+n_2)+n_2 ~  q_m(n_1n_2) - \alpha \Delta n_1 \Big) ~ \delta n_1 dx.
\end{eqnarray*}
The expression of $\frac{\delta \mathcal{E}}{\delta n_2}$ follows from a similar computation. Therefore the gradient flow associated to the free energy \eqref{eq:E12} according to the Wasserstein metric can be written as follows \cite{Otto},
\begin{align}
 \partial_t{n_1} - \nabla \cdot(n_1 \nabla \frac{\delta \mathcal{E}}{\delta n_1}(n_1,n_2))= 0,  \label{eq:den1} \\
 \partial_t{n_2} - \nabla \cdot(n_2 \nabla \frac{\delta \mathcal{E}}{\delta n_2}(n_1,n_2))= 0.  \label{eq:den2} 
 \end{align}
The metric used here is not the traditional distance on $L^2(\mathbb{R}^d)$, but the Wasserstein distance. This distance is defined on the set of probability distributions on $\mathbb{R}^d$. It is used in a wide variety of physically meaningful equations like the porous medium equation or degenerate parabolic equations \cite{ KW, JKO,Otto}. We recover Eqs.  \eqref{eq:n1} and \eqref{eq:n2} of the GFM by adding growth terms $G_1$ and $G_2$ to \eqref{eq:den1} and \eqref{eq:den2}.

 It is worth noting that the free energy decreases along the trajectories of the equation in the absence of growth terms. Indeed, using the Green formula and the Eqs. \eqref{eq:den1} and \eqref{eq:den2}, we have
 \begin{equation} \label{eq:dte}
  \begin{aligned}
  \partial_t  \mathcal{E} (n_1,n_2)  =& \int_{\mathbb{R}^d} (\frac{\delta \mathcal{E} }{\delta n_1}(n_1(x,t),n_2(x,t)) ~ \partial_t  n_1(x,t)  \\
  & +\frac{\delta \mathcal{E} }{\delta n_2}(n_1(x,t),n_2(x,t)) ~ \partial_t  n_2(x,t) ) ~dx \\
  = & \int_{\mathbb{R}^d} ( \frac{\delta \mathcal{E} }{\delta n_1}(n_1(x,t),n_2(x,t))  \nabla \cdot (n_1(x,t) \nabla \frac{\delta \mathcal{E} }{\delta n_1}(n_1(x,t),n_2(x,t))) \\
  & +\frac{\delta \mathcal{E} }{\delta n_2}(n_1(x,t),n_2(x,t)) \nabla \cdot (n_2(x,t) \nabla \frac{\delta \mathcal{E} }{\delta n_2}(n_1(x,t),n_2(x,t)))) ~ dx  \\
   = & - \int_{\mathbb{R}^d}  n_1(x,t) |\nabla \frac{\delta \mathcal{E} }{\delta n_1}(n_1(x,t),n_2(x,t))|^2 ~ dx \\
    &- \int_{\mathbb{R}^d} n_2(x,t) |\nabla \frac{\delta \mathcal{E} }{\delta n_2}(n_1(x,t),n_2(x,t))|^2 ~ dx\leq 0.
   \end{aligned}
  \end{equation}
  Therefore, when the growth rates are set to 0, the model evolves in a way that minimize the free energy. This free energy depends of the pressures, representing the levels of volume exclusion and the segregation. Therefore the model seeks to minimize the pressures, which means moving away from situations where the volume exclusion or the segregation constraints are violated.
  
  \section{Formal proof of Theorem 1}
  
  This section is dedicated to the formal proof of Theorem 1. First, we prove Eqs. \eqref{n0p0} and \eqref{segre} that lead to the definition of the evolving domains of the HSM. Secondly, we compute the equations for the densities $n_1$, $n_2$ and $n$ in the limit $\epsilon \rightarrow 0, m  \rightarrow +\infty, \alpha  \rightarrow 0$. Third, we prove Eq. \eqref{compl}, also named complementary relation, which gives the evolution of the pressure inside the domain for the HSM. Finally, we compute the speeds of the boundaries of the domains.

Eqs. \eqref{n0p0} and \eqref{segre} follow from the pressure laws \eqref{eq:p} and \eqref{eq:q}. Indeed, $ (1-n)p_{\epsilon} = \epsilon n$, it follows in the limit $\epsilon \rightarrow 0$ that 
$$ (1-n^{\infty})p^{\infty}=0.$$
Working out the repulsion pressure law \eqref{eq:q} leads to the formula 
$$ (1+r) q_{m} = (1-\frac{1}{m})^{\frac{1}{m-1}} q_{m} ^{\frac{m}{m-1}} .$$
Taking the limit $m  \rightarrow +\infty$, we recover the segregation property, i.e.
$$ (1+r^{\infty}) q^{\infty}= q^{\infty} .$$
Since $ q_m >\frac{m}{m-1}$ for all $m$, we deduce $q^{\infty}> 1$ and 
$$r^{\infty}=n_1^{\infty}n_2^{\infty}=0.$$

Eqs. \eqref{eq:n10} and \eqref{eq:n20} follow directly from taking the limit $\epsilon \rightarrow 0, m  \rightarrow +\infty, \alpha  \rightarrow 0$ in \eqref{eq:n1} and \eqref{eq:n2}. To recover \eqref{eq:n0}, it is first interesting to remark that the solution of \eqref{eq:n1}-\eqref{eq:p2} satisfies

\begin{equation}  \label{eq:nH}
\begin{split}
\partial_t n - \Delta (H(n,r))  &+\alpha \nabla \cdot ( {n_1} \nabla( \Delta  {n_1})  + {n_2} \nabla( \Delta  {n_2}))  \\
& = {n_1} G_1({p_1}) + {n_2} G_2({p_2}),
\end{split}
\end{equation}
where
\begin{equation}  \label{eq:H}
 H(n,r) =(p_{\epsilon}(n)-\epsilon \ln (p_{\epsilon}(n)+\epsilon) + \epsilon \ln \epsilon) +( rq_m(r) +  (r+1)^{m} - q_m(r)).
 \end{equation}
Indeed, by adding \eqref{eq:n1} and \eqref{eq:n2}, we obtain
 \begin{equation} \label{eq:a}
\begin{split}
\partial_t n - \nabla \cdot (n \nabla p_{\epsilon}(n) + 2 r \nabla q_m(r) + q_m(r) \nabla r)  &+\alpha \nabla \cdot ( {n_1} \nabla( \Delta  {n_1})  + {n_2} \nabla( \Delta  {n_2}))  \\
& = {n_1} G_1({p_1}) + {n_2} G_2({p_2}).
\end{split}
\end{equation}
Then given formula \eqref{eq:p},
\begin{equation} \label{eq:b}
 \begin{aligned} 
  n \nabla p_{\epsilon}(n) &=  n ~ p'_{\epsilon}(n) \nabla n\\
  &= \epsilon \frac{n}{(1-n)^2} \nabla n\\
  &= \epsilon \frac{1-(1-n)}{(1-n)^2} \nabla n\\
  &= (p_{\epsilon}'(n) - \epsilon \frac{1}{(1-n)}) \nabla n \\
  & = \nabla (p_{\epsilon}(n)-\epsilon \ln (p_{\epsilon}(n)+\epsilon) + \epsilon \ln \epsilon),
  \end{aligned}
  \end{equation}
and given formula \eqref{eq:q},
\begin{equation} \label{eq:c}
 \begin{aligned} 
  2 r \nabla q_m(r) + q_m(r) \nabla r &=  r \nabla q_m(r) + \nabla (rq_m(r))\\
&= (1+r)\nabla q_m(r) - \nabla q_m(r) + \nabla (rq_m(r)) \\ 
& = m(1+r)^{m-1} \nabla r + \nabla ((r-1)q_m(r)) \\
& = \nabla (rq_m(r) +  (r+1)^{m} - q_m(r)), 
  \end{aligned}
  \end{equation}
  and inserting \eqref{eq:b}, \eqref{eq:c} into \eqref{eq:a} gives \eqref{eq:H}.

Since $ (r+1)^{m} = \frac{m-1}{m} (1+r) q_m(r) $ and $r^{\infty}=0$, passing to the limit as $ \epsilon \rightarrow 0, m\rightarrow\infty$, we obtain
 \begin{equation}  \label{eq:H0}
 \begin{aligned}
 H(n^{\infty},r^{\infty}) & = p^{\infty}  + r^{\infty}q^{\infty}+q^{\infty} -q^{\infty} \\
 & = p^{\infty}.
 \end{aligned}
  \end{equation}
  So, taking also the limit $\alpha \rightarrow 0$, we have:
  \begin{equation*} 
\partial_t n^{\infty} - \Delta p^{\infty} = n_1^{\infty} G_1(p_1^{\infty}) + n_2^{\infty} G_2(p_2^{\infty}).
\end{equation*}

To recover the complementary relation \eqref{compl}, we need to compute an evolution equation for the pressure. To do so, first we pass to the limit $ \alpha \rightarrow 0$ in \eqref{eq:nH} but keep the notations of ${n_1}$, ${n_2} $, ${p_1}$, ${p_2} $, $p_{\epsilon} $, $q_{m} $ for the limit. We multiply the resulting equation by $p_{\epsilon}'(n)$ and obtain,
\begin{equation} \label{eq:palp}
\partial_t p_{\epsilon} - p'_{\epsilon}(n)\Delta (H(n,r))= p'_{\epsilon}(n)( {n_1} G_1({p_1}) + {n_2} G_2({p_2})).
\end{equation}
Multiplying the Eq. \eqref{eq:palp} by $\epsilon$ and taking into account that $p_{\epsilon}'(n)= \frac{1}{\epsilon} (p_{\epsilon}+\epsilon)^2$ yields
\begin{equation*}
\epsilon \partial_t p_{\epsilon} -   (p_{\epsilon}+\epsilon)^2\Delta (H(n,r))  =(p_{\epsilon}+\epsilon)^2( {n_1} G_1({p_1}) + {n_2} G_2({p_2})).
\end{equation*}
We now take the limit $\epsilon \rightarrow 0$, and denoting the pressure $p_{\epsilon} $ at the limit by $p^{\infty} $ but keep denoting the limit of the densities and the repulsion pressure by ${n_1}$, ${n_2} $, ${p_1}$, ${p_2} $ ,$q_{m} $, we get
\begin{equation*}
- {p^{\infty}}^2 \Delta (H(n^{\infty},r^{\infty})) = {p^{\infty}}^2( {n_1}G_1({p_1}) + {n_2}G_2({p_2})). 
\end{equation*}
Then passing to the limit $ m\rightarrow \infty$, and using the expression \eqref{eq:H0} for the limit of H, we recover the complementary relation \eqref{compl}.
 This concludes the formal proof of the theorem.
 
In order to recover the speed of the boundary of the Hele-Shaw model, we first focus on the speed of the exterior boundary. Thanks to Eq. \eqref{eq:n0}, for all $\varphi \in C_c^{\infty}(\mathbb{R}^d)$,
 \begin{equation*}
\int_{\mathbb{R^d}} \partial_t n^{\infty} ~ \varphi = \int_{\mathbb{R^d}}  p^{\infty} \Delta  \varphi  +  \int_{\mathbb{R^d}} n_1^{\infty} ~ G_1( p_1^{\infty}) ~ \varphi + \int_{\mathbb{R^d}} n_2^{\infty} ~ G_2( p_2^{\infty}) ~ \varphi .
\end{equation*}
Therefore applying the Green formula twice, yields
 \begin{equation*}
 \begin{aligned}
 \partial_t  \int_{\Omega(t)} n^{\infty} ~ \varphi  =  & \int_{\Omega_1(t)}  p^{\infty} \Delta  \varphi+ \int_{\Omega_2(t)}  p^{\infty} \Delta  \varphi +  \int_{\Omega_1(t)} G_1( p_1^{\infty}) ~ \varphi + \int_{\Omega_2(t)}  G_2( p_2^{\infty}) ~ \varphi \\
 =  &  \int_{\Omega_1(t)} (\Delta p^{\infty} + G_1( p_1^{\infty}))  \varphi - \int_{\partial \Omega_1(t)} \frac{ \partial p^{\infty}}{\partial \upsilon} \varphi   + \int_{\partial \Omega_1(t)}  p^{\infty}   \frac{ \partial \varphi}{\partial \upsilon}   \\
 & + \int_{\Omega_2(t)} (\Delta p^{\infty} + G_2( p_2^{\infty}))  \varphi - \int_{\partial \Omega_2(t)} \frac{ \partial p^{\infty}}{\partial \upsilon} \varphi  + \int_{\partial \Omega_2(t)}  p^{\infty}   \frac{ \partial \varphi}{\partial \upsilon}   \\
  = & \int_{\Omega(t)} (\Delta p^{\infty} + n_1^{\infty} G_1( p_1^{\infty})+ n_2^{\infty} G_2( p_2^{\infty})) ~ \varphi \\
 & - \int_{\partial \Omega(t)} \frac{ \partial p^{\infty}}{\partial \upsilon} \varphi  -  \int_{\Gamma(t)} [\frac{ \partial p^{\infty}}{\partial \upsilon}]_{12} ~\varphi + \int_{\partial \Omega(t)}  p^{\infty}   \frac{ \partial \varphi}{\partial \upsilon}  + \int_{\Gamma(t)}  [p^{\infty}]_{12}   \frac{ \partial \varphi}{\partial \upsilon} , \\
 \end{aligned}
 \end{equation*}
 where $\Gamma(t) = \Omega_1(t) \cap \Omega_2(t)$ is the interface between the two domains $\Omega_1(t)$ and $\Omega_2(t)$. On $\partial \Omega(t)$, $\partial \Omega_1(t)$, $\partial \Omega_2(t)$ the unit normals $\upsilon$ are directed outwards to the respective domains. On $\Gamma(t)$, the normal $\upsilon$ is directed from $\Omega_2(t)$ towards $\Omega_1(t)$. For $x \in \Gamma$, we denote
 $$ [\frac{ \partial p^{\infty}}{\partial \upsilon}]_{12}  = \big (\nabla ( \left. p^{\infty} \right \rvert_{\Omega_2})(x)- \nabla (\left. p^{\infty}\right \rvert_{\Omega_1})(x) \big)\cdot \upsilon(x),$$
 and
 $$ [p^{\infty}]_{12}  = ( \left. p^{\infty} \right \rvert_{\Omega_2})(x)- (\left. p^{\infty}\right \rvert_{\Omega_1})(x) .$$
 From Eq. \eqref{compl}, the first integral is equal to 0. In addition, the pressure is equal to zero on the boundary $\partial \Omega(t)$ so that
  \begin{equation} \label{eq:dtnphi}
 \begin{aligned}
  \partial_t  \int_{\Omega(t)} n^{\infty} ~ \varphi  =  & - \int_{\partial \Omega(t)} \frac{ \partial p^{\infty}}{\partial \upsilon} \varphi  -  \int_{\Gamma(t)} [\frac{ \partial p^{\infty}}{\partial \upsilon}]_{12} ~\varphi + \int_{\Gamma(t)}  [p^{\infty}]_{12}   \frac{ \partial \varphi}{\partial \upsilon} . 
  \end{aligned}
 \end{equation}
Moreover since $ n^{\infty} = 1$ in the domain $\Omega(t)$, we have
   \begin{equation} \label{eq:dtnphi2}
   \partial_t  (\int_{\Omega(t)} n^{\infty} ~ \varphi ~ dx)  =  \int_{\partial \Omega(t)} V_{\partial\Omega(t)} ~ n^{\infty} ~ \varphi ,
    \end{equation}
where $V_{\partial\Omega(t)}$ is the normal speed of the boundary $\partial\Omega(t)$ in the outwards direction.
Since Eqs. \eqref{eq:dtnphi} and  \eqref{eq:dtnphi2} are verified for all $\varphi \in C_c^{\infty}(\mathbb{R}^d)$,  we deduce that 
\begin{equation} 
   V_{\partial\Omega(t)} = - \frac{ \partial p^{\infty}}{\partial \upsilon} ,
    \end{equation}
    and at the interface $\Gamma$,
    \begin{equation} 
   [p^{\infty} ]_{12}=0 \quad \mbox{and} \quad [\frac{ \partial p^{\infty}}{\partial \upsilon}]_{12}=0.
    \end{equation}
    Then, both $p^{\infty}$ and its normal derivative are continuous at the interface $\Gamma$. To find the velocity at the interface the same method needs to be applied on either $n_1^{\infty}$ or $n_2^{\infty}$ and it leads to
    \begin{equation} 
   V_{\Gamma(t)} = - \frac{ \partial p^{\infty}}{\partial \upsilon} , 
    \end{equation}
where $V_{\Gamma(t)}$ is the speed of the boundary $\Gamma(t)$ measured in the direction of $\upsilon$. This gives \eqref{V}.

If we consider the case where the initial densities of the HSM are defined by $n_1^{\infty}= \mathbbm{1}_{B_{R_1}(t)}$ and $n_2^{\infty}(t,x)= \mathbbm{1}_{B_{R_2}(t) \backslash B_{R_1}(t)}$ with $B_R$ the ball of center 0 and radius $R$ and $R_1(t)<R_2(t)$, then $n^{\infty} = \mathbbm{1}_{B_{R_2}(t)}$, so
 $$ \partial_t n^{\infty} = R_2'(t) \delta_{\{|x|=R_2(t)\}}.$$
Using that  the pressure and its normal derivative are continuous through the interface Eq. \eqref{eq:dtnphi}, and using radial symmetry, we obtain 
 $$  \partial_t n^{\infty} = \Delta p^{\infty} + n_1^{\infty}G_1(p^{\infty}) + n_2^{\infty}G_2(p^{\infty}) = |\nabla p^{\infty}| \delta_{\{|x|=R_2(t)\}}.$$ 
 Then, $$ R_2'(t) = |\nabla p^{\infty}(R_2(t),t)|,$$ where, by abuse of notation, we have denoted $p^{\infty}(r,t)$ the constant value of $p^{\infty}(.,t)$ on the surface $\{ |x|=r\}$.
 Similarly, in applying the same result on the species $n_1^{\infty}$, we obtain
 $$ R_1'(t) = |\nabla p^{\infty}(R_1(t),t)|.$$

\section{Numerical method}

This section is devoted to the derivation of a numerical scheme used to perform the simulations of the GFM presented in Section 1.3. Since the equations for the densities are of fourth order, we first introduce a relaxation model, in which the fourth order terms are replaced by a coupled system of second order equations. Then, we present the numerical scheme together with some of its properties and CFL stability condition. The last part contains some simulations without initial segregation.

\subsection{Relaxed model}
In order to simplify the computation of system \eqref{eq:n1} - \eqref{eq:p2}, we lower its order through a suitable relaxation approximation. The relaxation system, depending on the relaxation parameter $\nu$, is written as follows
\begin{align}
& \partial_t n_1^{\nu} -\nabla \cdot (n_1^{\nu}\nabla p_1^{\nu})  + \frac{1}{\nu} \nabla \cdot \big( n_1^{\nu} \nabla (c_1^{\nu}-n_1^{\nu})\big) = n_1^{\nu} G_1(p_1^{\nu}), \label{eq:n1r} \\
&\partial_t n_2^{\nu} - \nabla \cdot (n_2^{\nu}\nabla p_2^{\nu})  +\frac{1}{\nu} \nabla \cdot \big(n_2^{\nu} \nabla (c_2^{\nu}-n_2^{\nu})\big)= n_2^{\nu} G_2(p_2^{\nu}),\label{eq:n2r} \\
& \alpha \Delta c_1^{\nu} = \frac{1}{\nu} (c_1^{\nu}-n_1^{\nu}) ,\label{eq:c1r} \\
& \alpha \Delta c_2^{\nu} = \frac{1}{\nu} (c_2^{\nu}-n_2^{\nu}) . \label{eq:c2r} 
\end{align}
where $c_1$ and $c_2$ are two new variables. The fourth order terms in \eqref{eq:n1} - \eqref{eq:p2} are now replaced by second order terms supplemented with Poisson equations which define $c_1$ and $c_2$. This type of relaxation has been used for examples, for the numerical approximation of the Navier-Stokes-Korteweg equations \cite{CDN, CR}. The Relaxation Gradient Flow Model (RGFM) formally converges toward the GFM  when $\nu \rightarrow 0$. Here we give supporting elements for this statement. We suppose that $(n_1^{\nu})$ and $(n_2^{\nu})$ converge when $\nu$ goes to 0. Since the two species play the same role, we only consider $n_1$. Inserting \eqref{eq:c1r} in \eqref{eq:n1r} we have,
$$ \partial_t n_1^{\nu} -\nabla \cdot (n_1^{\nu}\nabla p_1^{\nu})  + \alpha \nabla  \cdot \big(n_1^{\nu} \nabla \Delta c_1^{\nu} \big)=n_1^{\nu} G_1(p_1^{\nu}).$$
Then to prove the convergence we need to show that $c_1^{\nu} \rightarrow n_1$  when $\nu \rightarrow 0$. We can rewrite \eqref{eq:c1r} as
$$ -\alpha \Delta (c_1^{\nu}-n_1^{\nu}) + \frac{1}{\nu} (c_1^{\nu}-n_1^{\nu}) =\alpha \Delta n_1^{\nu} .$$
So that
$$  c_1^{\nu}-n_1^{\nu} =\alpha (-\alpha \Delta+ \frac{1}{\nu})^{-1} \Delta n_1^{\nu} .$$
Using Fourier transformation, we obtain:
$$  \hat{c}_1^{\nu}-\hat{n}_1^{\nu} =- \frac{ \alpha |\xi|^2}{\alpha |\xi|^2+ \frac{1}{\nu}} \hat{n }_1^{\nu} \quad \forall \xi.$$
And then,
$$ \lim_{\nu \rightarrow 0} c_1^{\nu}-n_1^{\nu} = 0.$$
With the same computation we can obtain that
$$ \lim_{\nu \rightarrow 0} c_2^{\nu}-n_2^{\nu} = 0.$$
This formally shows the convergence of the RGFM toward the GFM.

\subsection{The scheme}

We aim to numerically solve the RGFM given by \eqref{eq:n1r}-\eqref{eq:c2r} with Neumann boundary conditions by a finite-volume method. The choice of the boundary condition is arbitrary because we are interested in the dynamics of the system whilst the densities have not reached the boundary. To facilitate the comprehension, we omit the indices $\nu$. For the sake of simplicity, we only consider the 1D case on a finite interval $ \Omega = (a,b)$.

We divide the computational domain into finite-volume cells $C_j =[x_{j-1/2} ,x_{j+1/2} ] $ of uniform size $\Delta x$ with $x_j =j\Delta x, j\in \{1,...,M_x\} $,  and 
$x_{j}= \frac{x_{j-1/2} +x_{j+1/2} }{2}$ so that 
$$ a= x_{1/2}  < x_{3/2} < ... <  x_{j-1/2} < x_{j+1/2} < ... < x_{M_x-1/2}  < x_{M_x+1/2} = b$$
and define the cell average of functions $n_1(t,x)$ and $n_2(t,x)$ on the cell $C_j$ by
$$ \bar{n}_{\beta_j}(t)= \frac{1}{\Delta x} \int_{C_j}n_\beta(t,x)\,dx,\quad  \beta \in \{1,2\}.$$
A semi-discrete finite-volume scheme is obtained by integrating system \eqref{eq:n1r}-\eqref{eq:c2r} over $C_j$ and is given by
\begin{equation} \label{eq:scheme}
\begin{aligned}	
&\partial_t {\bar{n}}_{1_j}(t) = - \frac{F_{1,j+1/2}(t)-F_{1,j-1/2}(t)}{\Delta x}+{\bar{n}}_{1_j}(t) G_1(p_{1_j}(t)), \\
&\partial_t {\bar{n}}_{2_j}(t) = - \frac{F_{2,j+1/2}(t)-F_{2,j-1/2}(t)}{\Delta x}+{\bar{n}}_{2_j}(t) G_2({p}_{2_j}(t)),\\
& \alpha \frac{{c}_{1_{j+1}}(t)-2{c}_{1_{j}}(t)+{c}_{1_{j-1}}(t)}{(\Delta x)^2}-\frac{1}{\alpha} ({c}_{1_j}(t)-{\bar{n}}_{1_j}(t)) =0,\\
&\alpha \frac{{c}_{2_{j+1}}(t)-2{c}_{2_{j}}(t)+{c}_{2_{j-1}}(t)}{(\Delta x)^2}  - \frac{1}{\alpha} ({c}_{2_j}(t)-{\bar{n}}_{2_j}(t)) =0.
\end{aligned}	
\end{equation}
Here, $c_{\beta_{j}}(t) \approx c_\beta(t,x_j),\ \beta \in \{1,2\}$ and $F_{\beta,j+1/2}$ are numerical fluxes approximating 
$-n_{\beta} u_{\beta}:=-n_{\beta} \partial_x(p_{\beta}-(c_{\beta}-n_{\beta}))$ and defined by:
	\begin{equation} \label{eq:F}
	 F_{\beta,j+1/2} =(u_{{\beta}_{j+1/2}})^+\bar{n}_{\beta_j} +(u_{{\beta}_{j+1/2}})^- {\bar{n}}_{\beta_{j+1}},\quad \beta \in \{1,2\},
	 \end{equation}
where
	\begin{equation} \label{eq:u}
	 {u_\beta}_{j+1/2} = - \frac{\big(p_{\beta_{j+1}}-(c_{\beta_{j+1}}-{\bar n}_{\beta_{j+1}} )\big)-\big(p_{\beta_j}-(c_{\beta_j}-{\bar n}_{\beta_j} )\big)}{\Delta x},
	\end{equation}
and $(u_{{\beta}_{j+1/2}})^+=\max(u_{{\beta}_{j+1/2}},0)$ and $(u_{{\beta}_{j+1/2}})^-=\min(u_{{\beta}_{j+1/2}},0)$ are its positive and negative part, respectively. 
		
It is easy to see that scheme \eqref{eq:scheme}--\eqref{eq:u} is first order in space and if one uses an explicit Euler method for time evolution, then the scheme is also first order in time. Higher
order approximations can also be obtained, see, e.g., \cite{CCH}.

\subsection{CFL condition and properties of the scheme}
In this section, we present some fundamental properties the scheme is endowed with. In particular, we prove that for all times $t\ge 0$, scheme \eqref{eq:scheme}--\eqref{eq:u} preserves the non-negativity of the computed densities ${\bar{n}}_{1}$ and ${\bar{n}}_{2}$ and also ensures that the total density $n$ stays below $1$. The former property
guarantees physically meaningful values of the computed densities while the second one is enforced to make sense of the pressure law for which 1 is a singular point. 

We consider the semi-discrete finite volume scheme \eqref{eq:scheme}--\eqref{eq:u} and assume that is evolved in time by the forward Euler method. We denote by 
${\bar{n}}^{k}_{1_j}$, ${\bar{n}}^{k}_{2_j}$, $p^{k}_{1_j}$ and $p^{k}_{2_j}$ the computed densities and pressures obtained at time $t^k=k \Delta t$, i.e., ${\bar{n}}^{k}_{1_j}:={\bar{n}}_{1_j}(t^k)$,  
${\bar{n}}^{k}_{2_j}:={\bar{n}}_{2_j}(t^k)$, ${p}^{k}_{1_j}:={p}_{1_j}(t^k)$ and ${p}^{k}_{2_j}:={p}_{2_j}(t^k)$, and prove the following two propositions.
	
\begin{prop} (Positivity of the density)
Consider the semi-discrete finite volume scheme \eqref{eq:scheme}--\eqref{eq:u} that is evolved in time by the forward Euler method. Provided that the initial densities 
${\bar{n}}^{0}_{1_j} \geq 0$ and ${\bar{n}}^{0}_{2_j}\geq 0$ for all $j \in \{1,...,M_x\}$, and that the growth terms $G_{\beta}$ are nonnegative for $\beta=1,2$, a 
sufficient CFL condition for the  cell averages ${\bar{n}}^{k}_{1_j}$ and ${\bar{n}}^{k}_{2_j}$, with $k \in [0, \frac{N}{\Delta t}]$, to be positive is
\begin{equation} 
\Delta t \leq \frac{\Delta x}{2 \displaystyle \max_{\substack{j\in \{1,...,M_x\}, \\
\beta \in \{1,2\} } } \{({u^{k}_{\beta_{j+1/2}}})^+,-({u^{k}_{\beta_{j+1/2}}})^- \}}, \quad \forall k \in \left[0, \frac{N}{\Delta t}\right] ,\label{eq:CFL}
\end{equation}
where $({u^{k}_{\beta_{j+1/2}}})^+ := (u_{\beta_{j+1/2}})^+ (t^k)$ and $({u^{k}_{\beta_{j+1/2}}})^- := (u_{\beta_{j+1/2}})^- (t^k)$.
\end{prop}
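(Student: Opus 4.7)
The plan is to prove the positivity of $\bar n^{k}_{\beta_j}$ by induction on $k$, treating the two species identically. The base case $k=0$ holds by hypothesis, so I would concentrate on the induction step: assuming $\bar n^{k}_{\beta_j}\geq 0$ for all $j$ and $\beta\in\{1,2\}$, derive the sufficient CFL condition \eqref{eq:CFL} that guarantees $\bar n^{k+1}_{\beta_j}\geq 0$.

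The first step is to write out the forward Euler update for one species, say $\beta=1$, substituting the upwind flux \eqref{eq:F}. After expanding $F^{k}_{1,j+1/2}-F^{k}_{1,j-1/2}$, I would collect terms according to which cell average they multiply. The update takes the form
\begin{equation*}
\bar n^{k+1}_{1_j}=\bar n^{k}_{1_j}\,A^{k}_j+\frac{\Delta t}{\Delta x}\bigl[(u^{k}_{1_{j-1/2}})^{+}\,\bar n^{k}_{1_{j-1}}-(u^{k}_{1_{j+1/2}})^{-}\,\bar n^{k}_{1_{j+1}}\bigr],
\end{equation*}
where the self-coefficient is
\begin{equation*}
A^{k}_j=1-\frac{\Delta t}{\Delta x}\bigl[(u^{k}_{1_{j+1/2}})^{+}-(u^{k}_{1_{j-1/2}})^{-}\bigr]+\Delta t\,G_1(p^{k}_{1_j}).
\end{equation*}
The two neighbor contributions are manifestly non-negative because $(u^{k}_{1_{j-1/2}})^{+}\geq 0$, $-(u^{k}_{1_{j+1/2}})^{-}\geq 0$, and the induction hypothesis yields $\bar n^{k}_{1_{j\pm 1}}\geq 0$.

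The remaining step is to ensure $A^{k}_j\geq 0$. Since $G_1\geq 0$ by assumption, we may drop the growth contribution and require the stronger inequality
\begin{equation*}
\frac{\Delta t}{\Delta x}\bigl[(u^{k}_{1_{j+1/2}})^{+}-(u^{k}_{1_{j-1/2}})^{-}\bigr]\leq 1.
\end{equation*}
Bounding each of the two terms in the bracket by $\max_{j,\beta}\{(u^{k}_{\beta_{j+1/2}})^{+},-(u^{k}_{\beta_{j+1/2}})^{-}\}$ yields exactly the CFL condition \eqref{eq:CFL}, and the same argument applies verbatim to $\beta=2$. This closes the induction and gives $\bar n^{k+1}_{\beta_j}\geq 0$ for all $j$ and $\beta$.

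I do not expect serious obstacles here; the argument is the standard positivity proof for an upwind finite-volume discretisation of a transport equation, with the twist that the advective velocity $u_\beta$ is itself a nonlinear function of the unknowns through $p_\beta$ and $c_\beta$. The only delicate point is that the CFL bound \eqref{eq:CFL} depends on the numerical velocities $u^{k}_{\beta_{j+1/2}}$ at the current time step, so the condition must be checked (and $\Delta t$ possibly adjusted) at every iteration rather than fixed a priori; this is why the proposition states the condition for all $k\in[0,N/\Delta t]$. The non-negativity of the growth terms is crucial only in that it permits us to discard the $\Delta t\,G_1(p^{k}_{1_j})$ contribution and obtain a clean, species-independent bound.
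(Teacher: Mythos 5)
Your proposal is correct and follows essentially the same route as the paper: expand the upwind flux difference, use the induction hypothesis and $G_\beta\geq 0$ to keep the neighbor contributions nonnegative, and impose the CFL bound so the coefficient of $\bar n^{k}_{\beta_j}$ is nonnegative (the paper just writes $1=\tfrac12+\tfrac12$ and requires each half-coefficient to be nonnegative, which is the same as your bound of the bracket by twice the maximum). No gaps.
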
 
\begin{proof}
Assume that at a given time $t^k=k \Delta t$, ${\bar{n}}^{k}_{1_j} \geq 0$ and ${\bar{n}}^{k}_{2_j}\geq 0$ for all $j \in \{1,...,M_x\}$. Then, the new densities are given by the general formula
\begin{equation}
{\bar{n}}^{k+1}_{\beta_j}={\bar{n}}^{k}_{\beta_j}  - \frac{\Delta t }{\Delta x} (F^{k}_{\beta,j+1/2}-F^{k}_{\beta,j-1/2})+\Delta t ~{\bar{n}}^{k}_{\beta_j} G_1(p^k_{\beta_j})
\end{equation}
where $F^{k}_{\beta,j+1/2} := F_{\beta,j+1/2}(t^k)$.
Taking into account formula \eqref{eq:F} for fluxes $F^{k}_{\beta,j+1/2}$ and the fact that the growth terms $G_{\beta}$ are nonnegative for $\beta=1,2$, we obtain
\begin{equation*}
\begin{aligned}
{\bar{n}}^{k+1}_{\beta_j}&\geq {\bar{n}}^{k}_{\beta_j}-\frac{\Delta t}{\Delta x}\left[(u^{k}_{\beta_{j+1/2}})^+\,{\bar{n}}^{k}_{\beta_j} + 
(u^{k}_{\beta_{j+1/2}})^-\,{\bar{n}}^{k}_{\beta_{j+1}}-(u^{k}_{\beta_{j-1/2}})^+\,{\bar{n}}^{k}_{\beta_{j-1}}-(u^{k}_{\beta_{j-1/2}})^-\,{\bar{n}}^{k}_{\beta_j}\right] \\
&\geq \frac{\Delta t }{\Delta x} (u^{k}_{\beta_{j-1/2}})^+\,{\bar{n}}^{k}_{\beta_{j-1}} + 
\left(\frac{1}{2} - \frac{\Delta t }{\Delta x} (u^{k}_{\beta_{j+1/2}})^+\right) {\bar{n}}^{k}_{\beta_j} \\
& + \left(\frac{1}{2} + \frac{\Delta t }{\Delta x} (u^{k}_{\beta_{j-1/2}})^-\right)\,{\bar{n}}^{k}_{\beta_j}
-\frac{\Delta t }{\Delta x} (u^{k}_{\beta_{j+1/2}})^-\,{\bar{n}}^{k}_{\beta_{j+1}}.
\end{aligned}
\end{equation*}
By definition, $(u^{k}_{\beta_{j+1/2}})^-\leq 0$ and $(u^{k}_{\beta_{j+1/2}})^+\geq 0$ for all $j \in \{1,...,M_x\}$. Then, provided that the CFL condition \eqref{eq:CFL}  
is satisfied,
$$ 
\frac{1}{2} - \frac{\Delta t }{\Delta x} (u^{k}_{\beta_{j+1/2}})^+\geq 0 \quad \text{and} \quad 
\frac{1}{2} + \frac{\Delta t }{\Delta x} (u^{k}_{\beta_{j-1/2}})^- \geq 0 \quad \forall j \in \{1,...,M_x\}.$$
Since $({\bar{n}}^{k}_{1_j})_{j \in \{1,...,M_x\}},({\bar{n}}^{k}_{2_j})_{j \in \{1,...,M_x\}}$ are non-negative, we conclude that ${\bar{n}}^{k+1}_{1_j} \geq 0$ and ${\bar{n}}^{k+1}_{2_j}\geq 0$ for all $j \in \{1,...,M_x\}$.
\end{proof}

\begin{prop} (Maximum total density) Consider the semi-discrete finite volume scheme \eqref{eq:scheme}--\eqref{eq:u} that is evolved in time by the forward Euler method. Provided that the initial total density ${\bar{n}}^{0}_j={\bar{n}}^{0}_{1_j} +{\bar{n}}^{0}_{2_j} < 1$ for all $j \in \{1,...,M_x\}$, a sufficient CFL condition for the average total densities ${\bar{n}}^{k}_j = {\bar{n}}^{k}_{1_j}+{\bar{n}}^{k}_{2_j}$, to be below 1 for all $k \in [0, \frac{N}{\Delta t}]$ is
\begin{equation} 
\begin{split}
	\Delta t \leq (\frac{1}{\displaystyle \max_{\substack{j\in \{1,...,M_x\}, \\
                             \beta \in \{1,2\} } }  {\bar{n}}^{k}_{\beta_j}} -1) \frac{1}{\frac{4}{\Delta x} \displaystyle \max_{\substack{j\in \{1,...,M_x\}, \\
                             \beta \in \{1,2\} } } \{({u^{k}_{\beta_{j+1/2}}})^+,-({u^{k}_{\beta_{j+1/2}}})^+ \} +\displaystyle \max_{\substack{j\in \{1,...,M_x\}, \\
                             \beta \in \{1,2\} } } G_\beta({p^k_\beta}_j) },  \\
                             \quad \forall k \in [0, \frac{N}{\Delta t}] . \label{eq:CFL2}
                             \end{split}
	\end{equation}
\end{prop}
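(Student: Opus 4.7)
The plan is to mimic the positivity proof of the previous proposition, tracking an upper bound instead of a lower bound. I would assume inductively at time $t^k$ that $\bar n^k_{\beta_j}\ge 0$ (from the CFL~\eqref{eq:CFL}, which is implied by the sharper~\eqref{eq:CFL2}) and that the single-species maximum $N^k:=\max_{j,\beta}\bar n^k_{\beta_j}$ is strictly below $1$, then show $\bar n^{k+1}_j\le 1$.

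The first step is to substitute the upwind flux~\eqref{eq:F} into the forward Euler update and regroup $\bar n^{k+1}_{\beta_j}$ as a three-point stencil
\begin{equation*}
\bar n^{k+1}_{\beta_j}=\tfrac{\Delta t}{\Delta x}(u^k_{\beta_{j-1/2}})^+\,\bar n^k_{\beta_{j-1}}+A^k_{\beta,j}\,\bar n^k_{\beta_j}-\tfrac{\Delta t}{\Delta x}(u^k_{\beta_{j+1/2}})^-\,\bar n^k_{\beta_{j+1}},
\end{equation*}
with middle coefficient $A^k_{\beta,j}=1-\tfrac{\Delta t}{\Delta x}\big[(u^k_{\beta_{j+1/2}})^+-(u^k_{\beta_{j-1/2}})^-\big]+\Delta t\,G_\beta(p^k_{\beta_j})$, which is non-negative under~\eqref{eq:CFL} since $G_\beta\ge 0$. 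All three coefficients are therefore non-negative.

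The second step is to replace each density in the stencil by the upper bound $N^k$, each velocity component by $U^k:=\max_{j,\beta}\{(u^k_{\beta_{j+1/2}})^+,-(u^k_{\beta_{j+1/2}})^-\}$, and each $G_\beta$ by $G^k:=\max_{j,\beta}G_\beta(p^k_{\beta_j})$. Discarding the non-positive divergence contribution inside $A^k_{\beta,j}$ (which only strengthens the inequality) gives
\begin{equation*}
\bar n^{k+1}_{\beta_j}\le N^k\Big(1+\Delta t\,G^k+\tfrac{2\Delta t\,U^k}{\Delta x}\Big).
\end{equation*}
Summing over $\beta=1,2$ and invoking the CFL~\eqref{eq:CFL2} in the rearranged form $N^k\big(1+\Delta t(G^k+\tfrac{4U^k}{\Delta x})\big)\le 1$ yields $\bar n^{k+1}_j\le 1$, closing the induction.

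The main obstacle is the careful accounting in the summation over the two species: a naive bound would introduce a spurious factor of two when passing from the per-species estimate to the estimate on $\bar n^k_j$, which is incompatible with the CFL~\eqref{eq:CFL2} as stated. The resolution is to exploit the approximate segregation enforced by the repulsion term $q_m(n_1n_2)$ in the scheme, so that at each cell essentially one species contributes to $\bar n^k_j$ and one recovers $\bar n^k_j\lesssim N^k$ rather than the crude $2N^k$. This careful bookkeeping, together with the observation that the growth term $\sum_\beta\bar n^k_{\beta_j}G_\beta(p^k_{\beta_j})$ must be grouped against $\bar n^k_j$ itself (not bounded term by term against $N^k$), is the technical heart of the argument and the step most sensitive to the upwind discretisation.
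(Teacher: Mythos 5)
Your first two steps coincide with the paper's: write the forward Euler update with the upwind fluxes \eqref{eq:F}, use the sign structure of $(u^k_{\beta_{j\pm1/2}})^\pm$, and bound each species through the constants $U^k$ and $G^k$ times $\max_j\bar n^k_{\beta_j}$ (the paper keeps all four flux terms crudely and obtains the factor $4U^k/\Delta x$ rather than your sharper $2U^k/\Delta x$; that difference is harmless). The genuine gap is in how you close the argument. Your resolution of the factor-of-two problem --- invoking ``approximate segregation enforced by the repulsion term $q_m(n_1n_2)$'' so that essentially one species contributes per cell and $\bar n^k_j\lesssim N^k$ --- does not hold and cannot be made rigorous at this level. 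The scheme enforces no exact segregation: the repulsion enters only through the pressures $p^k_{\beta_j}$ inside $u^k_{\beta_{j\pm1/2}}$ and $G_\beta(p^k_{\beta_j})$, which you have already absorbed into $U^k$ and $G^k$, so no structural information remains to exploit; the proposition assumes nothing about segregated data; and the paper's own simulations (mixed initial data, overlap at interfaces) admit configurations such as $\bar n^k_{1_j}=\bar n^k_{2_j}=0.49$, for which $\bar n^k_j=0.98\approx 2N^k$, so the inequality $\bar n^k_j\le N^k$ that your argument needs is simply false in general.

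For comparison, the paper attempts no such refinement: it derives the per-species estimate $\bar n^{k+1}_{\beta_j}\le\bigl(1+\tfrac{4\Delta t}{\Delta x}U^k+\Delta t\,G^k\bigr)\max_j\bar n^k_{\beta_j}$ and then adds the two inequalities, writing the resulting bound directly with the single factor $\max_{j,\beta}\bar n^k_{\beta_j}$, from which \eqref{eq:CFL2} is read off as sufficient. So the difficulty you identify is real relative to the written argument (a naive summation yields the prefactor times $\max_j\bar n^k_{1_j}+\max_j\bar n^k_{2_j}$, not times $\max_{j,\beta}\bar n^k_{\beta_j}$), but the paper passes over it without invoking any segregation mechanism, whereas your substitute step would fail. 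To complete a proof along your lines you would need either a CFL condition expressed through the total density (or through $\max_j\bar n^k_{1_j}+\max_j\bar n^k_{2_j}$), or an actual discrete estimate preventing the two species' maxima from accumulating at the same cell; neither is supplied by the sketch, so as written the proposal does not establish the stated bound.
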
 

\begin{proof}
Assume that at a given time $t^k=k \Delta t$, ${\bar{n}}^{k}_{1_j} \leq 1$ and ${\bar{n}}^{k}_{2_j}\leq 1$ for all $j \in \{1,...,M_x\}$. The densities 
${\bar{n}}^{k+1}_{\beta_j},\ \beta=\{1,2\}$, at the following time step are given by 
\begin{equation*}
\begin{aligned}
{\bar{n}}^{k+1}_{\beta_j} &= {\bar{n}}^{k}_{\beta_j}  - \frac{\Delta t }{\Delta x} (F^{k}_{\beta,j+1/2}-F^{k}_{\beta,j-1/2})
+\Delta t ~{\bar{n}}^{k}_{\beta_j} G_1(p^k_{\beta_j}) \\
&\leq {\bar{n}}^{k}_{\beta_j}  - \frac{\Delta t}{\Delta x}\left[(u^{k}_{\beta_{j+1/2}})^+\,{\bar{n}}^{k}_{\beta_j} + 
(u^{k}_{\beta_{j+1/2}})^-\,{\bar{n}}^{k}_{\beta_{j+1}}-(u^{k}_{\beta_{j-1/2}})^+\,{\bar{n}}^{k}_{\beta_{j-1}}-(u^{k}_{\beta_{j-1/2}})^-\,{\bar{n}}^{k}_{\beta_j}\right] \\[1ex]
&\hspace*{1.2cm}+\Delta t ~{\bar{n}}^{k}_{\beta_j} G_1(p^k_{\beta_j})  \\
&\leq \Big(1+4 \frac{\Delta t }{\Delta x} \displaystyle \max_{j\in \{1,...,M_x\}} \{(u^{k}_{\beta_{j+1/2}})^+,-(u^{k}_{\beta_{j+1/2}})^- \}
+ \Delta t  \displaystyle \max_{j\in \{1,...,M_x\}} G_\beta(p^k_{\beta_j} \Big) \displaystyle \max_{j\in \{1,...,M_x\} }  {\bar{n}}^{k}_{\beta_j}.
\end{aligned}
\end{equation*} 
Adding the last inequality for the two densities we get that
\begin{equation*}
\begin{aligned}
{\bar{n}}^{k+1}_{j} &\leq\Big(1+4 \frac{\Delta t }{\Delta x} \displaystyle \max_{\substack{j\in \{1,...,M_x\}, \\ \beta \in \{1,2\}}} 
\{(u^{k}_{\beta_{j+1/2}})^+,-(u^{k}_{\beta_{j+1/2}})^- \} \\
&\hspace*{1.1cm}+ \Delta t  \displaystyle \max_{\substack{j\in \{1,...,M_x\}, \\
\beta \in \{1,2\} } } G_\beta(p^k_{\beta_j}) \Big) \displaystyle \max_{\substack{j\in \{1,...,M_x\}, \\ \beta \in \{1,2\} } }  {\bar{n}}^{k}_{\beta_j}.
\end{aligned}
\end{equation*} 
	Then a sufficient condition for the total density to stay below 1 is \eqref{eq:CFL2}.
\end{proof}

\begin{remark}
Eqs. \eqref{eq:CFL} and  \eqref{eq:CFL2} give the conditions to choose the time step in the numerical simulations. It should be however observed that in some computations we needed to reduce the time step in order to avoid oscillations the pressure develops when the density is close to the singular point $n=1$.  
\end{remark}
\begin{remark}
Similar theorem can also be proven if the second-order upwind spatial scheme from \cite{CCH} is used and the forward Euler method is replaced by a higher-order
SSP ODE solver since a time step in such solvers can be written as a convex combination of several forward Euler steps, see, e.g., \cite{GKS,GST}.
\end{remark}

\subsection{Numerical simulation of the GFM with initial mixing}
Finally, we illustrate the performance of RGFM in a number of numerical examples when the initial populations are not segregated. 
We take as growth functions, 
\begin{align} \label{ex3:G} G_1(p)=(20-p) \quad \mbox{and} \quad G_2(p)=(10-p). \end{align}
The numerical parameter values are $\epsilon =0.01$, $m =10$, $\alpha =0.01$, $\nu =0.001$, and the initial densities are given by 
\begin{align} \label{ex3:n} n_1^{\rm ini}(x) = 0.7e^{-5x^2} \quad \mbox{and} \quad n_2^{\rm ini}(x) = 0.5e^{-5(x-0.5)^2}+0.6e^{-5(x+1)^2}. \end{align}
 These initial conditions depicted in Fig.~\ref{fig3}(a) have been chosen to show the evolution with non segregated and non symmetric initial conditions. Fig.~\ref{fig3} (b), (c), (d) represent the solution at time $t=0.01$, $t=0.1$ and $t=0.15$, respectively. The red line represents the species $n_1$ while the blue line represents the species $n_2$.

As expected, the two species are growing and diffusing. Since the initial overlap is quite strong, segregation does not appear immediately. In Fig.~\ref{fig3}(b), we see that the red species, which is growing faster than the blue one, also grows in the interval occupied by the blue species. When the species reach the maximum density in Fig.~\ref{fig3}(c), interfaces between the two populations are created. We can observe that the red species is split into two groups. In Fig.~\ref{fig3}(d), we also observe that the inner species is pushing the other one to have more space to grow.

  \begin{figure}[!h]
   \centering
\includegraphics[width=1\linewidth]{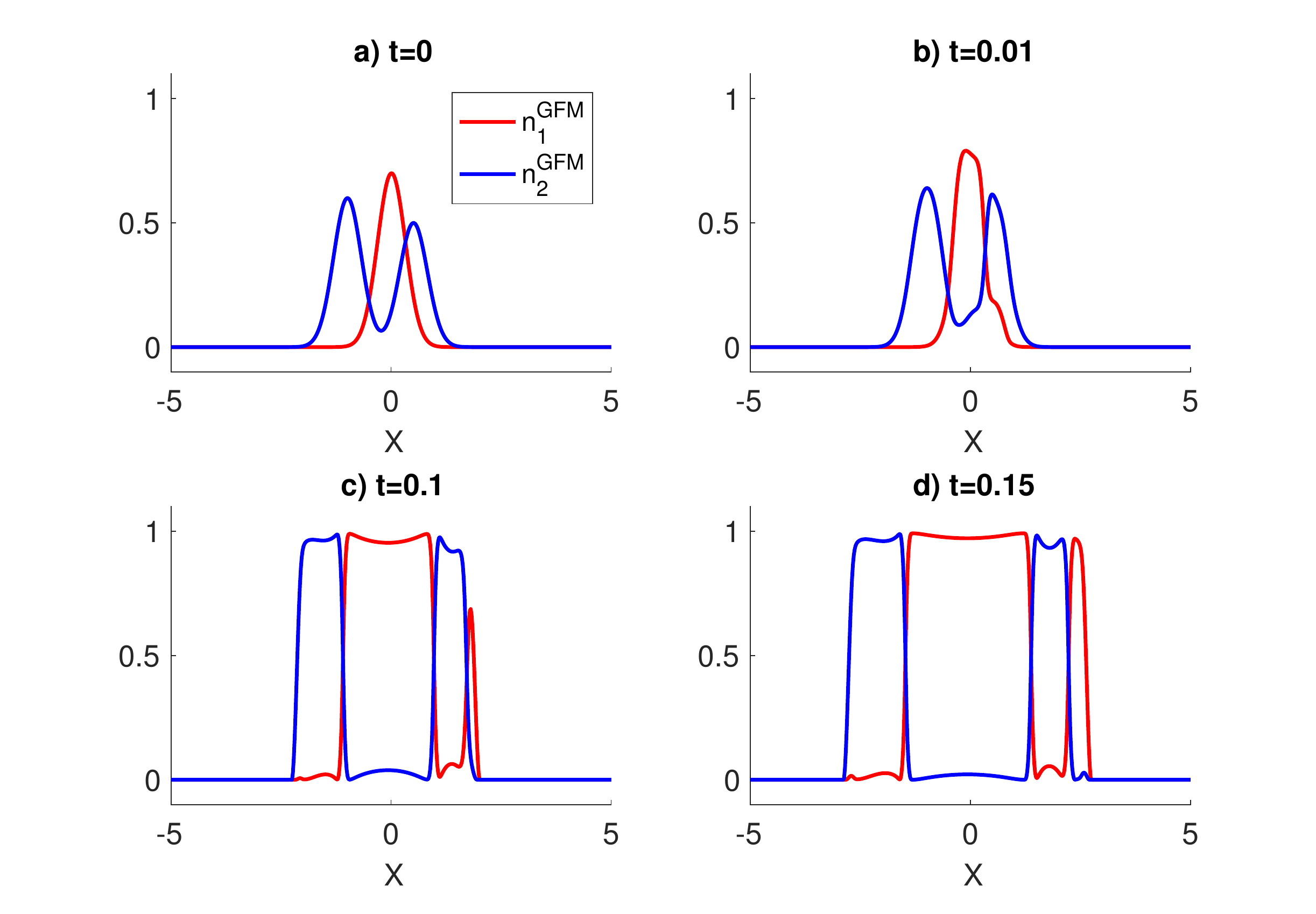}
\caption{Densities $n_1$ (red), $n_2$ (blue) as functions of position $x$ for the GFM at different times: (a) $t=0$ , (b)  $t=0.01$, (c) $t=0.1$, (d) $t=0.15$. Initial conditions: density of the GFM from Eq.~\eqref{ex3:n}, growth function from Eq.~\eqref{ex3:G}.}
\label{fig3}
\end{figure}

However we still observe some mixing. On the one hand, looking at the central red species, we observe that the density of the blue species is not exactly equal to 0. This is due to the low value of the parameter $m$ which is the exponent of the repulsion pressure. As $m$ becomes larger, the simulation fulfills the segregation property faster. This can be observed on Fig.~\ref{fig4}, where we plot results obtained from the simulation with the same initial condition than previously given by Eqs. \eqref{ex3:G} and \eqref{ex3:n} at time $t=0.15$ with parameters $\epsilon=0.1$, $\alpha=0.01$, $\nu=0.001$. Respectively in Fig.~\ref{fig4} (a), (b), (c) the densities are plotted for the cases $ m=5$, $m=10$ and $m=50$. We can observe in Fig.~\ref{fig4}(c) that there is more segregation than in Figs.~\ref{fig4}(a) and (b). If we compare the red central domain, in Fig.~\ref{fig4}(c) the domain is smaller than in Fig.~\ref{fig4}(b) because the blue species has grown in the middle of it in order to fulfill the non-mixing property. Then in Fig.~\ref{fig4}(c), we don't observe the small bump of the blue population found in Fig.~\ref{fig4}(b).

  \begin{figure}[!h]
   \centering
\includegraphics[width=1\linewidth]{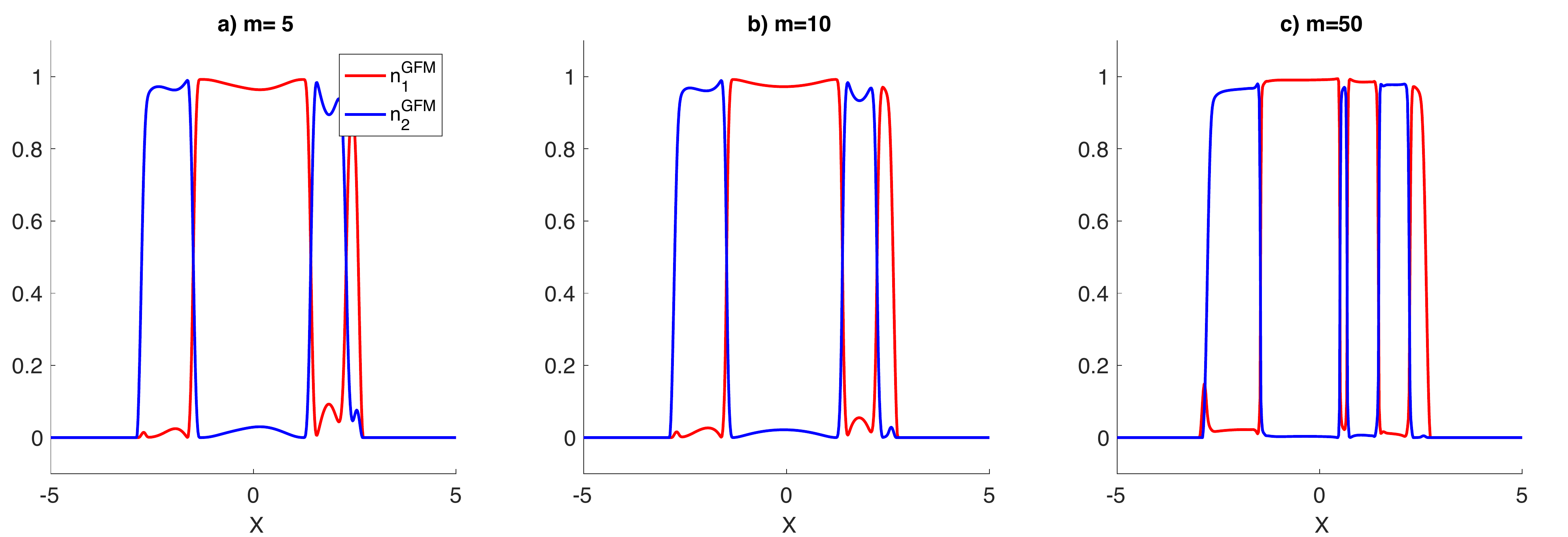}
\caption{Densities $n_1$ (red), $n_2$ (blue) as functions of position $x$ for the GFM  for different values of $m$: (a) $m=5$ , (b)  $m=10$, (c) $m=50$. Initial conditions: density of the GFM from Eq.~\eqref{ex3:n}, growth function from Eq.~\eqref{ex3:G}.}
\label{fig4}
\end{figure}

On the other hand, at the interface, the species mix in a small interval. This is due to the finiteness of the parameter $\alpha$ which multiplies the fourth order diffusion term and of $\nu$ which is the relaxation parameter. We observe these phenomena in Fig.~\ref{fig5}, where we show the results obtained from simulations done with the same initial conditions \eqref{ex3:G}, \eqref{ex3:n} than previously, at time $t=0.15$ with parameters $\epsilon=0.1$, $m=10$. Respectively in panels Fig.~\ref{fig5} (a), (b), (c) the densities are plotted for the cases $ \alpha=0.1, \nu =0.001$; $ \alpha=0.01, \nu =0.001$ and $ \alpha=0.001, \nu =0.0001$ (we have observed that the parameter $\nu$ needs to be smaller than $\alpha$). We can observe that as $\alpha$ becomes smaller, the width of the interface between the two populations gets smaller. We can also observe in Fig.~\ref{fig5}(c) that on the right side, thin stripes with alternating populations appear. This is the kind of dynamics we observed in simulations performed with $\alpha=0$. Indeed without the fourth order term in the GFM, the numerical simulations do not have the same dynamics as the HSM. This has been verified with numerical schemes of a similar structure to the one exposed in this paper, as well as with schemes derived from the gradient flow in the Wasserstein metric structure of the system. It appears that without the fourth order term, a species surrounded by an other one will prefer to split into two small subdomains crossing through the other species in order to grow instead of pushing it, which does not correspond to what we find with the HSM dynamics in the incompressible limit. For these reasons, the fourth order term is essential to producing realistic dynamics. What this analysis shows is that the choice of the parameter $\alpha$ is critical. In particular, in order to recover the HSM, $\alpha$ has to be made smaller as $\epsilon \rightarrow 0$ and $m \rightarrow \infty$. The values of the parameters $\epsilon$, $m$, $\alpha$ must be carefully chosen to match a particular application.

\begin{figure}[!h]
   \centering
\includegraphics[width=1.\linewidth]{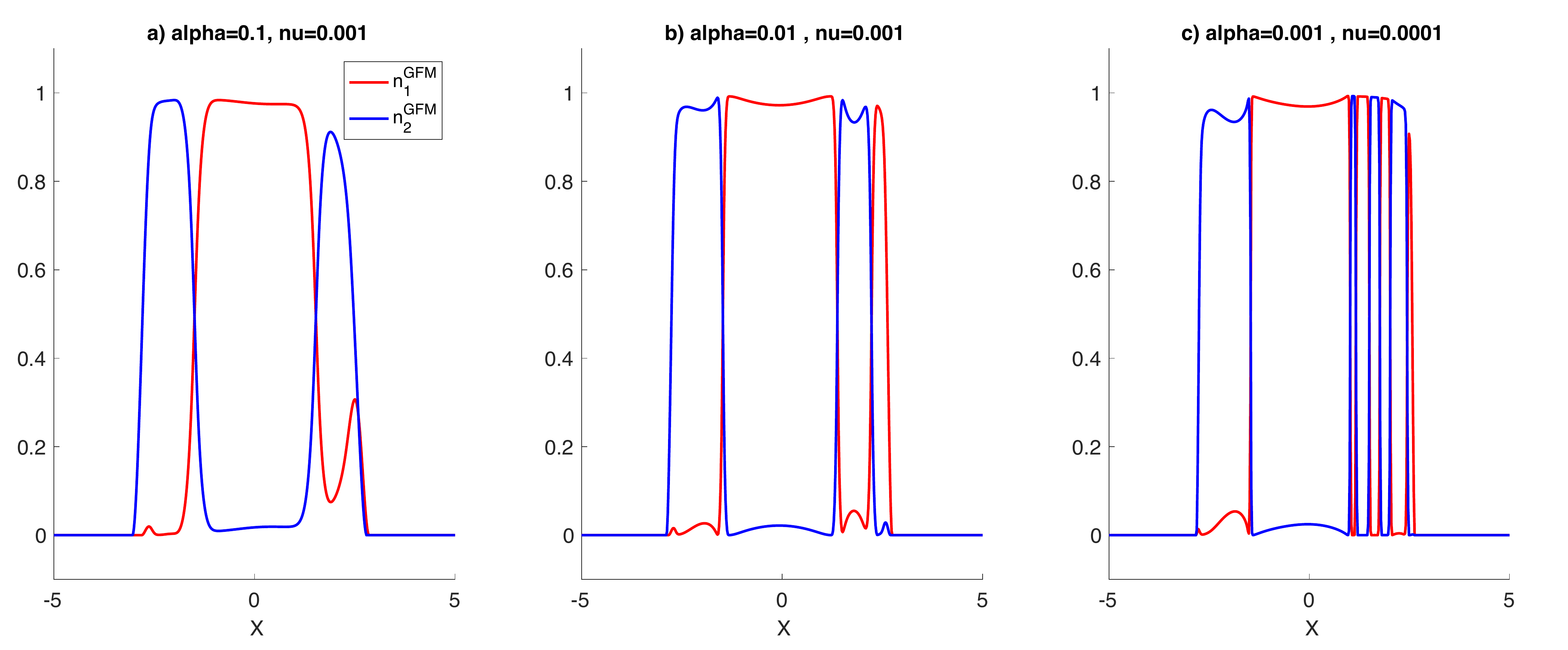}
\caption{Densities $n_1$ (red), $n_2$ (blue) as functions of position $x$ for the GFM for different values of $\alpha$ and $\nu$: (a) $ \alpha=0.1, \nu =0.001$, (b) $ \alpha=0.01, \nu =0.001$, (c) $ \alpha=0.001, \nu =0.0001$. Initial conditions: density of the GFM from Eq.~\eqref{ex3:n}, growth function from Eq.~\eqref{ex3:G}.}
\label{fig5}
\end{figure}

\subsection{Conclusion}

In this paper, we have presented a gradient flow model for two populations which avoid mixing. In addition we have introduced a numerical scheme and used it to study the behaviour of the system with different parameters. This model is a generalisation of a single population case which has been studied in the literature previously. 

This paper demonstrates by a combination of analytical and numerical arguments, that the gradient flow model converges to a Hele-Shaw free interface/boundary model, when an appropriate set of parameters is sent to zero (or infinity). The analytical proof is only formal and is supported by numerical simulations. In particular, we observe that the speed of the boundaries and interfaces of the gradient flow model for parameters taken in the asymptotic regime is the same as those computed by the Hele-Shaw model. This is verified with values of the parameters fairly far from the asymptotic regime which means that the convergence is quite fast.

Perspectives for this work are both on the analytical and numerical sides. On the analytical side, a rigorous proof of the convergence of the gradient flow model to the Hele-Shaw one seems within reach in the case $\alpha =0$ and $q=0$. On the numerical side, simulations of two-dimensional cases will be developed and applied to the modelling of tissue growth and growth termination.

 
\end{document}